\newtheorem{proposition}{\bf Proposition}
\newtheorem{theorem}{\bf Theorem}
\newtheorem{definition}{\bf Definition}
\crefname{theorem}{theorem}{\bf Theorem}
\crefname{example}{example}{\bf Example}
\crefname{observation}{observation}{\bf Observation}
\crefname{lemma}{lemma}{\bf Lemma}
\crefname{corollary}{corollary}{\bf Corollary}
\crefname{proposition}{proposition}{\bf Proposition}
\crefname{definition}{definition}{\bf Definition}
\crefname{claim}{claim}{\bf Claim}
\crefname{reductionrule}{reduction rule}{\bf Reduction rule}
\newcommand{\instance}{\ensuremath{I}\xspace}
\newcommand{\suml}{\sum\limits_}
\DeclarePairedDelimiter\floor{\lfloor}{\rfloor}
\newcommand{\curly}[1]{\ensuremath{\{#1\}}\xspace}
\newcommand{\NPH}{\ensuremath{\mathsf{NP}}-hard\xspace}
\newcommand{\yes}{{\sc Yes}\xspace}
\newcommand{\FPT}{\ensuremath{\mathsf{FPT}}\xspace}
\newcommand{\bud}{\ensuremath{b}\xspace}
\newcommand{\proj}{\ensuremath{P}\xspace}
\newcommand{\degreeset}{\ensuremath{\operatorname{D}}\xspace}
\newcommand{\degreeproj}{\ensuremath{\mathcal{D}}\xspace}
\newcommand{\costfunction}{\ensuremath{\operatorname{c}}\xspace}
\newcommand{\voters}{\ensuremath{N}\xspace}
\newcommand{\valid}{\ensuremath{\mathcal{V}}\xspace}
\newcommand{\lbounds}{\ensuremath{\mathcal{L}}\xspace}
\newcommand{\eachlb}{\ensuremath{l}\xspace}
\newcommand{\ubounds}{\ensuremath{\mathcal{H}}\xspace}
\newcommand{\eachub}{\ensuremath{h}\xspace}
\newcommand{\fullinstance}{\ensuremath{\langle \voters,\degreeproj,\costfunction,\bud,\lbounds,\ubounds\rangle}\xspace}
\newcommand{\ii}{\ensuremath{i}\xspace}
\newcommand{\pa}{\ensuremath{j}\xspace}
\newcommand{\pdof}[2]{%
  \ifthenelse{\isempty{#1}}%
    {\ifthenelse{\isempty{#2}}
    {\ensuremath{P_{j}^{t}}\xspace}
    {\ensuremath{P_{j}^{#2}}\xspace}
    }
    {\ifthenelse{\isempty{#2}}{\ensuremath{P_{#1}^{t}}\xspace} 
    {\ensuremath{P_{#1}^{#2}}\xspace} 
    }
}
\newcommand{\cof}[2]{%
  \ifthenelse{\isempty{#1}}%
    {\ifthenelse{\isempty{#2}}
    {\ensuremath{c_{j}^{t}}\xspace}
    {\ensuremath{c_{j}^{#2}}\xspace}
    }
    {\ifthenelse{\isempty{#2}}{\ensuremath{c_{#1}^{t}}\xspace} 
    {\ensuremath{c_{#1}^{#2}}\xspace} 
    }
}
\newcommand{\conrange}{\ensuremath{\tau_\pa}\xspace}
\newcommand{\maxconrange}{\ensuremath{\overline{\conrange}}\xspace}
\newcommand{\minconrange}{\ensuremath{\underline{\conrange}}\xspace}
\newcommand{\qof}[2]{%
  \ifthenelse{\isempty{#1}}%
    {\ifthenelse{\isempty{#2}}
    {\ensuremath{q_{j}^{t}}\xspace}
    {\ensuremath{q_{j}^{#2}}\xspace}
    }
    {\ifthenelse{\isempty{#2}}{\ensuremath{q_{#1}^{t}}\xspace} 
    {\ensuremath{q_{#1}^{#2}}\xspace} 
    }
}
\newcommand{\qmax}{\ensuremath{q_m}\xspace}
\newcommand{\qsum}{\ensuremath{q_\sigma}\xspace}
\newcommand{\csetof}[1]{
\ifthenelse{\isempty{#1}}
{\ensuremath{\costfunction\!\left(S\right)}\xspace}
{\ensuremath{\costfunction\!\left(#1\right)}\xspace}
}
\newcommand{\mdof}[1]{
\ifthenelse{\isempty{#1}}
{\ensuremath{t_{j}}\xspace}
{\ensuremath{t_{#1}}\xspace}
}
\newcommand{\dsetof}[1]{
\ifthenelse{\isempty{#1}}
{\ensuremath{\degreeset\!\left(P_{j}\right)}\xspace}
{\ensuremath{\degreeset\!\left(P_{#1}\right)}\xspace}
}
\newcommand{\utof}[2]{
\ifthenelse{\isempty{#1}}
{
\ifthenelse{\isempty{#2}}
{\ensuremath{u_i\left(S\right)}\xspace}
{\ensuremath{u_i\left(S,#2\right)}\xspace}
}
{
\ifthenelse{\isempty{#2}}
{\ensuremath{u_{#1}\left(S\right)}\xspace}
{\ensuremath{u_{#1}\left(S,#2\right)}\xspace}
}
}
\newcommand{\utofd}[2]{
\ifthenelse{\isempty{#1}}
{
\ifthenelse{\isempty{#2}}
{\ensuremath{u_i\left(S'\right)}\xspace}
{\ensuremath{u_i\left(S',#2\right)}\xspace}
}
{
\ifthenelse{\isempty{#2}}
{\ensuremath{u_{#1}\left(S'\right)}\xspace}
{\ensuremath{u_{#1}\left(S',#2\right)}\xspace}
}
}
\newcommand{\dutof}[2]{
\ifthenelse{\isempty{#1}}
{
\ifthenelse{\isempty{#2}}
{\ensuremath{d_i\left(S\right)}\xspace}
{\ensuremath{d_i\left(S,#2\right)}\xspace}
}
{
\ifthenelse{\isempty{#2}}
{\ensuremath{d_{#1}\left(S\right)}\xspace}
{\ensuremath{d_{#1}\left(S,#2\right)}\xspace}
}
}
\newcommand{\dutofd}[2]{
\ifthenelse{\isempty{#1}}
{
\ifthenelse{\isempty{#2}}
{\ensuremath{d_i\left(S'\right)}\xspace}
{\ensuremath{d_i\left(S',#2\right)}\xspace}
}
{
\ifthenelse{\isempty{#2}}
{\ensuremath{d_{#1}\left(S'\right)}\xspace}
{\ensuremath{d_{#1}\left(S',#2\right)}\xspace}
}
}
\newcommand{\variance}{variance coefficient\xspace}
\newcommand{\vpara}{\ensuremath{\gamma}\xspace}
\newcommand{\scorefunction}{\ensuremath{\operatorname{s}\!}\xspace}
\newcommand{\scoreof}[2]{
\ifthenelse{\isempty{#1}}
{
\ifthenelse{\isempty{#2}}
{\ensuremath{\scorefunction\left(\pdof{}{}\right)}\xspace}
{\ensuremath{\scorefunction\left(\pdof{}{#2}\right)}\xspace}
}
{
\ifthenelse{\isempty{#2}}
{\ensuremath{\scorefunction\left(\pdof{#1}{}\right)}\xspace}
{\ensuremath{\scorefunction\left(\pdof{#1}{#2}\right)}\xspace}
}
}
\newcommand{\msof}[1]{
\ifthenelse{\isempty{#1}}
{\ensuremath{\alpha\!\left(\pdof{}{}\right)}\xspace}
{\ensuremath{\alpha\!\left(\pdof{}{#1}\right)}\xspace}
}
\newcommand{\cardof}[1]{
\ifthenelse{\isempty{#1}}
{\ensuremath{|S|}\xspace}
{\ensuremath{|#1|}\xspace}
}
\newcommand{\lowerb}[2]{
\ifthenelse{\isempty{#1}}
{
\ifthenelse{\isempty{#2}}
{\ensuremath{\eachlb_i(j)}\xspace}
{\ensuremath{\eachlb_i(#2)}\xspace}
}
{
\ifthenelse{\isempty{#2}}
{\ensuremath{\eachlb_{#1}(j)}\xspace}
{\ensuremath{\eachlb_{#1}(#2)}\xspace}
}
}
\newcommand{\upperb}[2]{
\ifthenelse{\isempty{#1}}
{
\ifthenelse{\isempty{#2}}
{\ensuremath{\eachub_i(j)}\xspace}
{\ensuremath{\eachub_i(#2)}\xspace}
}
{
\ifthenelse{\isempty{#2}}
{\ensuremath{\eachub_{#1}(j)}\xspace}
{\ensuremath{\eachub_{#1}(#2)}\xspace}
}
}
\newcommand{\appof}[2]{
\ifthenelse{\isempty{#1}}
{
\ifthenelse{\isempty{#2}}
{\ensuremath{A_i(j)}\xspace}
{\ensuremath{A_i(#2)}\xspace}
}
{
\ifthenelse{\isempty{#2}}
{\ensuremath{A_{#1}(j)}\xspace}
{\ensuremath{A_{#1}(#2)}\xspace}
}
}
\newcommand{\chodof}[2]{
\ifthenelse{\isempty{#1}}
{
\ifthenelse{\isempty{#2}}
{\ensuremath{S(j)}\xspace}
{\ensuremath{#2(j)}\xspace}
}
{
\ifthenelse{\isempty{#2}}
{\ensuremath{S(#1)}\xspace}
{\ensuremath{#2(#1)}\xspace}
}
}
\newcommand{\chocof}[2]{
\ifthenelse{\isempty{#1}}
{
\ifthenelse{\isempty{#2}}
{\ensuremath{c^S(j)}\xspace}
{\ensuremath{c^{#2}(j)}\xspace}
}
{
\ifthenelse{\isempty{#2}}
{\ensuremath{c^S(#1)}\xspace}
{\ensuremath{c^{#2}(#1)}\xspace}
}
}
\newcommand{\pbrule}{\ensuremath{R}\xspace}
\newcommand{\nrule}{\ensuremath{R_{|S|}}\xspace}
\newcommand{\tstar}{\ensuremath{t^*}\xspace}
\newcommand{\crule}{\ensuremath{R_{\costfunction(S)}}\xspace}
\newcommand{\slimit}{\ensuremath{\delta}\xspace}
\newcommand{\ccaprule}{\ensuremath{R_{\widehat{\costfunction(S)}}}\xspace}
\newcommand{\drule}{\ensuremath{R_{\|\;\|}}\xspace}
\newcommand{\winnersof}[2]{%
  \ifthenelse{\isempty{#2}}%
    {\ensuremath{\pbrule\left(\instance\right)}\xspace}
    {\ensuremath{\pbrule\left(\instance#2\right)}\xspace}
}
\newcommand{\winners}[2]{%
  \ifthenelse{\isempty{#2}}%
    {\ensuremath{\operatorname{\winnerfunction\!}\left(#1,\instance\right)}\xspace}
    {\ensuremath{\operatorname{\winnerfunction\!}\left(#1,\instance#2\right)}\xspace}
}
\newcommand\mathcircled[1]{%
  \mathpalette\@mathcircled{#1}%
}
\newcommand\@mathcircled[2]{%
  \tikz[baseline=(math.base)] \node[draw,circle,inner sep=1pt] (math) {$\m@th#1#2$};%
}
\newcommand{\thickhline}{%
    \noalign {\ifnum 0=`}\fi \hrule height 1.3pt
    \futurelet \reserved@a \@xhline
}
\newcommand{\toothickhline}{%
    \noalign {\ifnum 0=`}\fi \hrule height 2pt
    \futurelet \reserved@a \@xhline
}
\newcolumntype{"}{@{\hskip\tabcolsep\vrule width 1.3pt\hskip\tabcolsep}}
\newcolumntype{x}{@{\hskip\tabcolsep\vrule width 2pt\hskip\tabcolsep}}
\newcommand*{\rom}[1]{\expandafter\@slowromancap\romannumeral #1@}
\setlist[enumerate]{nosep}
\title{Participatory Budgeting With Multiple Degrees of Projects And Ranged Approval Votes}
\author{Gogulapati Sreedurga}
\date{Indian Institute of Science}
\begin{document}

\maketitle

\begin{abstract}
    In an indivisible participatory budgeting (PB) framework, we have a limited budget that is to be distributed among a set of projects, by aggregating the preferences of voters for the projects. All the prior work on indivisible PB assumes that each project has only one possible cost. In this work, we let each project have a set of permissible costs, each reflecting a possible degree of sophistication of the project. Each voter approves a range of costs for each project, by giving an upper and lower bound on the cost that she thinks the project deserves. The outcome of a PB rule selects a subset of projects and also specifies their corresponding costs. We study different utility notions and prove that the existing positive results when every project has exactly one permissible cost can also be extended to our framework where a project has several permissible costs. We also analyze the fixed parameter tractability of the problem. Finally, we propose some important and intuitive axioms and analyze their satisfiability by different PB rules. We conclude by making some crucial remarks.
\end{abstract}
\section{Introduction}\label{sec: intro}
Participatory budgeting (PB) is a democratic voting paradigm that aggregates the opinions of citizens while deciding on how to fund the public projects \cite{cabannes2004participatory,shah2007participatory,sintomer2008participatory,wampler2010participatory,rocke2014framing}. 

Divisible PB assumes that the costs of the projects are totally flexible and any amount can be allocated to each of them. The existing work on indivisible PB, on the other hand, assumes that every project has a fixed cost that is to be allocated to the project if it is selected.  
However, many times in real-world, each project can be implemented upto different levels of sophistication. For example, a building can be built with wood or stone, depending on the amount allocated to it. That is, each project has a set of possible and permissible costs. Each cost in this set corresponds to a degree of sophistication of the project. In addition to choosing the set of projects that are to be funded, the mechanism designer must also choose a permissible cost for each of the selected projects. This idea of having multiple permissible costs for each project was first proposed in a survey by Aziz and Shah \cite{aziz2021participatory}, in which the authors called each of these costs a \emph{degree of completion}. However, the idea still remained to be studied and our paper bridges this gap.

Preference elicitation methods typically studied in any voting framework include approval votes, ordinal votes, and cardinal votes. These methods also continue to be the most studied preference elicitation approaches in PB \cite{shapiro2017participatory,aziz2018proportionally,talmon2019framework,rey2020designing,jain2020participatory,benade2021preference,aziz2021proportionally,fairstein2022welfare,sreedurga2022maxmin}. However, PB is a setting with several attributes like costs attached to the projects. The preferences and utilities of the voters are thus much more complex. This propels the need to devise preference models specific to PB, as pointed out by Aziz and Shah \cite{aziz2021participatory}. One step in this direction is the introduction of a special case of approval votes, called knapsack votes, by Goel et al. \cite{goel2019knapsack}, where each voter reports her most favorite budget division. This idea however is criticised for its assumption that any project not in this division yields no utility to the voter.

Our paper is another step in the direction of preference modeling for PB. We introduce the approach of \emph{ranged approval votes}, which strictly generalizes the approval votes. Each voter reports a lower bound and an upper bound on the cost that she thinks each project deserves. All the bounds are initially set to $0$ by default. Voting proceeds in two steps. In the first step, the voter starts by approving the projects she likes. For these approved projects, only the upper bounds will automatically change to the highest permissible cost. In the second step, the voter may \emph{optionally} change bounds for some of these approved projects, if she wishes to have a say on the amount they deserve. Note that this is cognitively not much more demanding than the approval votes since we do not force the voters to report the bounds. 
Notably, all our computational results can also be extended to more general utility functions with minor tweaks. Nevertheless, we present the results for ranged approval votes due to their cognitive simplicity and natural relevance in the model.

It needs to be mentioned that a work by Goel et al. \cite{goel2019knapsack} views divisible PB as a model of indivisible PB where every unit of money in the cost corresponds to one possible degree of sophistication, and proposes a greedy algorithm. Their work however imposes knapsack constraint on each vote as well as assumes all lower bounds to be zero. Their model hence forms a very restricted special case of ours.

\paragraph{\textbf{Our contributions}} The primary goal of this paper is to systematically study the PB model where each project has a set of permissible costs. Such a study has been conducted by Talmon and Faliszewski \cite{talmon2019framework} for approval-based PB, which is a special case of our model with every project having only one permissible cost. We generalize the PB rules and all the positive results in \cite{talmon2019framework} to our model. Namely, we propose polynomial-time, pseudo-polynomial time, and PTAS algorithms. Followed by this, we present some results on parameter tractability using a recently introduced parameter scalable limit \cite{sreedurga2022maxmin} and another novel parameter we introduce in the further sections. It needs to be highlighted that, as a part of our study, we also introduce and investigate some novel utility notions that are specific to our model and are not extensions of the literature. Finally, we propose some budgeting axioms for our model and examine their satisfiability with respect to our PB rules. All the axiomatic results are summarized in \Cref{tab: results}. We conclude by discussing the impact of our computational and axiomatic results and pointing out future directions.
\begin{table}[]
\begin{tabular}{l"c|c|c|c}
\textbf{PROPERTIES\textbackslash{}RULES} & $\boldsymbol{\nrule}$     & $\boldsymbol{\crule}$ & $\boldsymbol{\ccaprule}$ & $\boldsymbol{\drule}$ \\
\thickhline
\textbf{Degree-efficient}                & $\times$     & $\times$                                 & $\checkmark$                                & $\times$                                 \\
\hline
\textbf{Shrink-resitant}                 & $\checkmark$ & $\checkmark$                             & $\checkmark$                                & $\checkmark$                             \\
\hline
\textbf{Discount-proof}                  & $\checkmark$ & $\times$                                 & $\times$                                    & $\times$                                 \\
\hline
\textbf{Range-abiding}                   & $\checkmark$ & $\times$                                 & $\times$                                    & $\checkmark$                             \\
\hline
\textbf{Range-converging}                & $\checkmark$ & $\checkmark$                             & $\checkmark$                                & $\checkmark$                             \\
\hline
\textbf{Range-unanimous}                 & $\checkmark$ & $\times$                                 & $\times$                                    & $\checkmark$                             \\
\hline
\textbf{Upper bound-sensitive}           & $\checkmark$ & $\checkmark$                             & $\times$                                    & $\checkmark$                             \\
\hline
\textbf{Lower bound-sensitive}           & $\times$     & $\times$                                 & $\times$                                    & $\checkmark$               
\end{tabular}
\caption{Results for budgeting axioms in \Cref{sec: axioms}}
\label{tab: results}
\end{table}
\paragraph{\textbf{Organization of the paper}} We start by introducing the mathematical model in \Cref{sec: model}. We define different utility notions (some extended from the existing ones in the literature and some new) and the corresponding PB rules in \Cref{sec: utilities}. In \Cref{sec: computational}, we analyze the computational complexity of our PB rules and suggest ways to cope up with intractabilities. In \Cref{sec: axioms}, we define budgeting axioms for our model and examine their satisfiability. 

\section{Model}\label{sec: model}
In our model, a budget \bud is given. We denote the set of $n$ voters by $\voters = \curly{1,\ldots,n}$ and the set of $m$ projects by \proj. Each project $\pa \in \proj$ has \mdof{} possible degrees of sophistication captured by the set $\dsetof{} = \curly{\pdof{}{0},\pdof{}{1}, \ldots,\pdof{}{\mdof{}}}$. The cost of each degree \pdof{}{} is indicated by \cof{}{}. We assume that \cof{}{0} is zero for all $\pa \in \proj$ and it corresponds to not funding the project \pa.

Each voter $i \in \voters$ reports for every project \pa, a lower bound \lowerb{}{} and an upper bound \upperb{}{} such that $\lowerb{}{},\upperb{}{} \in \curly{\cof{}{0},\ldots,\cof{}{\mdof{}}}$ and $\lowerb{}{} \leq \upperb{}{}$. Let \lbounds and \ubounds respectively denote the collection all the lower bounds and upper bounds reported by all the voters. 

Let \degreeproj denote the set of all the possible degrees of all projects, or in other words, $\degreeproj = \bigcup_{\pa \in \proj}{\dsetof{}}$. We denote the cost of a set $S \subseteq \degreeproj$, $\sum_{\pdof{}{} \in S}{\cof{}{}}$, by \csetof{}. Given a subset $S \subseteq \degreeproj$, we use \chodof{}{} to denote the chosen degree(s) of project \pa in $S$. In other words, $\chodof{}{} = S \cap \dsetof{}$. We use the shorthand notation \chocof{}{} to denote \csetof{\chodof{}{}}. We say a subset $S \subseteq \degreeproj$ is \emph{\textbf{valid}} if $\csetof{} \leq \bud$ and $\cardof{\chodof{}{}} = 1$ for every $\pa \in \proj$. Let \valid denote the collection of all the valid subsets.

A PB instance with multiple degrees of sophistication is denoted by $\instance = \fullinstance$. Given an instance, the objective of a PB rule $\pbrule$ is to output a valid subset $S \in \valid$. 

\subsection{The PB Rules}\label{sec: utilities}
The PB rules we study are utilitarian rules. This implies that, given a function $u$ that measures utility a voter derives from a subset of projects, the corresponding PB rule outputs a valid set of projects that maximizes the sum of utilities of all the voters, i.e., it outputs some $S \in \valid$ such that $\sum_{i \in \voters}{\utof{}{}}$ is maximized.
We say a subset $S$ of projects is selected under a PB rule \pbrule if it maximizes the total utility of the voters. We say a project $\pdof{}{} \in \degreeproj$ \textbf{\emph{wins}} under a PB rule \pbrule if it belongs to some set that can be selected under \pbrule. Let \winnersof{\pbrule}{} be the collection of all the  projects that win under the PB rule \pbrule, given an instance \instance.

Now, we only need to define different utility functions. Given the lower and upper bounds reported by the voters, we define the utility of a voter from a valid set $S \in \valid$ in four ways. For each of these utility functions, we also give a shorthand notation for the utilitarian PB rule associated to it.
\begin{enumerate}
    \item \textbf{Cardinal utility ($\boldsymbol{\nrule}$ rule)}: Each voter \ii derives a utility of $1$ from a project \pa if the cost of the chosen degree falls within the bounds specified by the voter. Thus, $\utof{}{} = \cardof{\pa \in \proj : \lowerb{}{} \leq \chocof{}{} \leq \upperb{}{}}, \chocof{}{} \neq 0$.
    \item \textbf{Cost utility ($\boldsymbol{\crule}$ rule)}: A voter \ii derives a utility of \chocof{}{} from a project \pa if its value falls within the bounds specified by the voter. \begin{center}
        $\utof{}{} = \suml{\pa: \lowerb{}{} \leq \chocof{}{} \leq \upperb{}{}}{\chocof{}{}}$.
    \end{center}
    \item \textbf{Cost capped utility ($\boldsymbol{\ccaprule}$ rule)}: Each voter \ii derives a utility of \chocof{}{} from a project \pa if the value falls within the bounds reported by her, a utility of \upperb{}{} if $\chocof{}{} > \upperb{}{}$, and a utility of $0$ if $\chocof{}{} < \lowerb{}{}$. That is, with slight abuse of notation, $\utof{}{} = \sum_{\pa \in \proj}{\utof{}{j}}$, where \utof{}{j} is defined as:
    \begin{align*}
        \utof{}{j} \!=\!
        \begin{cases} 
        0 &  \chocof{}{} < \lowerb{}{}\\
        \chocof{}{} & \lowerb{}{} \leq \chocof{}{} \leq \upperb{}{}\\
        \upperb{}{} & \text{otherwise}
        \end{cases}
    \end{align*}
    \item \textbf{Distance disutility ($\boldsymbol{\drule}$ rule)}: From every project $\pa$, each voter derives a disutility of $0$ if the value falls within the bounds reported by her, a disutility of $\chocof{}{}-\upperb{}{}$ if $\chocof{}{} > \upperb{}{}$, and a disutility of $\lowerb{}{}-\chocof{}{}$ if $\chocof{}{} < \lowerb{}{}$. That is, with slight abuse of notation, $\dutof{}{} = \sum_{\pa \in \proj}{\dutof{}{j}}$, where \dutof{}{j} is defined as follows:
    \begin{align*}
        \dutof{}{j} \!=\!
        \begin{cases} 
        \lowerb{}{}-\chocof{}{} &  \chocof{}{} < \lowerb{}{}\\
        0 & \lowerb{}{} \leq \chocof{}{} \leq \upperb{}{}\\
        \chocof{}{}-\upperb{}{} & \text{otherwise}
        \end{cases}
    \end{align*}
    The corresponding PB rule $\boldsymbol{\drule}$ minimizes the total disutility.
\end{enumerate}
The first two notions of utilities are the natural extensions of the notions considered in \cite{talmon2019framework}. The former reflects that the voter is happy as long as the cost allocated to the project is acceptable to her, whereas the latter reflects that as the project gets more money, the voter gets happier if the cost is acceptable to her.

The second utility notion clearly assumes that if the project gets more money than what the voter thinks it deserves, voter derives zero utility. However, this need not always be the case in real-world. For example, say a voter will be happy to have an entertainment park in the neighborhood but feels that the park deserves any amount between 1000 and 5000 units. Now, if the park project is allocated 7000 units, the voter could still be happy that there is a park in the neighborhood but derive no more utility than 5000 (since it is the maximum value she thinks the park deserves). The third utility notion tackles such scenarios by modifying the second utility notion to cap the utility at \upperb{}{} instead of dropping it to $0$.

The first three utility notions assume that allocating any cost outside the range reported by the voter yields the same utility to her. However, in many situations, closer the cost of the project is to the acceptable range of the voter, higher is the satisfaction voter derives from it. For example, say a voter feels that a certain project is worth at least 1000 units. An outcome that allocates 900 units to it is likely to be more preferred by the voter over something that allocates 50 units to it. The fourth rule handles such situations by considering the distance between the cost allocated and the closest acceptable cost as the disutility. Farther the cost is from the acceptable range, higher is the amount misspent.

\section{Computational Complexity}\label{sec: computational}
Here, we analyze the computational complexity of our PB rules. We strengthen the existing positive results in the literature \cite{talmon2019framework} and also present a few new results on fixed parameter tractability. All the exact algorithms we present are based on dynamic programming. The approximation schemes we present depend on both dynamic programming as well as a clever rounding scheme.

\subsection{The Rule ${\nrule}$}\label{sec: nrule}
Recall that \nrule outputs a valid set that maximizes the sum of the utilities of voters, where the utility of a voter is defined as $\utof{}{} = \cardof{\pa \in \proj : \lowerb{}{} \leq \chocof{}{} \leq \upperb{}{}}$. Talmon and Faliszewski \cite{talmon2019framework} prove that for the case with only one permissible cost for each project, a subset maximizing the utility can be computed in polynomial time. We strengthen this result and prove that even when there are multiple permissible costs for each project, a subset maximizing the total utility can be computed in polynomial time.
\begin{theorem}\label{the: nrule-p}
    For any instance \instance, a subset $S \in \valid$ that is selected under \nrule can be computed in polynomial time.
\end{theorem}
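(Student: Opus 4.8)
The plan is to recognise this as a multiple-choice knapsack problem in disguise and then to exploit the fact that the objective values are polynomially bounded, so that a dynamic program indexed by the attainable utility (rather than by the budget) runs in polynomial time.

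First I would note that under \nrule the total utility is additive across projects and depends only on which single degree is chosen for each project. For a project $\pa$ and a degree $\pdof{}{d}$, define its \emph{score} $\scoreof{}{d}$ to be the number of voters $i$ with $\lowerb{}{} \le \cof{}{d} \le \upperb{}{}$, setting the score of the zero-cost degree $\pdof{}{0}$ to $0$ to respect the $\chocof{}{}\neq 0$ requirement. All these scores can be tabulated in $O(n|\degreeproj|)$ time, and for any valid set $S$ in which each project $\pa$ is assigned its degree $\pdof{}{d_{\pa}}$ we have $\sum_{i \in \voters}\utof{}{} = \sum_{\pa \in \proj}\scoreof{}{d_{\pa}}$. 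Hence \nrule simply asks us to pick exactly one degree per project so as to maximise the total score subject to the total cost being at most \bud.

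The crucial observation is that, although this reformulation is a (multiple-choice) knapsack-type problem---hence \NPH when the budget is large---the attainable objective value is small: each project contributes a score of at most $n$, so the utility of any valid set lies in $\{0,1,\ldots,nm\}$. This lets me run the dynamic program over utility rather than over cost. Order the projects $1,\ldots,m$ arbitrarily and let $f(\pa,v)$ be the minimum total cost of assigning one degree to each of the first $\pa$ projects so that their scores sum to exactly $v$ (with $f(\pa,v)=\infty$ if no such assignment exists). Then $f(0,0)=0$, $f(0,v)=\infty$ for $v>0$, and
\[
f(\pa,v) = \min_{0 \le d \le \mdof{}} \Big( f(\pa-1,\; v-\scoreof{}{d}) + \cof{}{d} \Big),
\]
treating entries with a negative second argument as $\infty$. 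The optimal utility is the largest $v$ with $f(m,v)\le \bud$, and an optimal valid set $S$ is recovered by standard backtracking through the table, recording for each project a degree attaining the minimum.

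Finally I would bound the running time. The table has $O(m\cdot nm)=O(nm^2)$ entries, and filling one entry for project $\pa$ inspects its $\mdof{}+1$ degrees, so the total work is $O\big(nm\cdot\sum_{\pa\in\proj}(\mdof{}+1)\big)=O(nm|\degreeproj|)$, which is polynomial in the input size and dominates the score precomputation. The only place where any real thought is required is the switch from a budget-indexed dynamic program (which would be merely pseudo-polynomial) to a utility-indexed one, justified by the polynomial bound $nm$ on the objective; correctness of the recurrence itself is the routine exchange argument that an optimal assignment for the first $\pa$ projects restricts to an optimal assignment for the first $\pa-1$ at the appropriate residual utility.
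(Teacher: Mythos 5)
Your proposal is correct and follows essentially the same route as the paper: the same per-degree score $\scoreof{}{}$, the same dynamic program indexed by (project prefix, exact total score) storing the minimum cost, the same extraction of the answer as the largest achievable score with cost at most $\bud$, and the same $O(m^2 n \tstar)$-type running time (your $d=0$ branch plays the role of the paper's ``copy $A(x-1,y)$'' case). Your explicit remark that the zero-cost degree gets score $0$ is a slightly more careful treatment of the $\chocof{}{}\neq 0$ condition, but the argument is otherwise identical.
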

\begin{proof}
    We present an algorithm that uses dynamic programming. Let $\scoreof{}{}$ denote the number of voters $i$ such that $\lowerb{}{} \leq \cof{}{} \leq \upperb{}{}$. Construct a dynamic programming table such that $A(x,y)$ corresponds to the cost of cheapest valid subset of $\bigcup_{j = 1}^x{\dsetof{}}$ such that the total score of projects in the set is exactly $y$.

    Let $F(y) \subseteq \dsetof{1}$ such that $\scoreof{1}{} = y$ for every $\pdof{1}{} \in \dsetof{1}$. We compute the first row as: $A(1,y) = \min(\cof{}{}: \pdof{}{} \in F(y))$. All the remaining rows are computed recursively as follows: $A(x,y) = \min\Big(A\big(x-1,y\big)\;,\;\min_{t = 1}^{\mdof{x}}{\big(A\big(x-1,y-\scoreof{x}{t}\big)+\cof{x}{t}\big)}\Big)$.
    In addition to computing this table, we store the sets corresponding to each entry in $A$ in a separate table $B$ as follows: if $A(x,y)$ achieves the minimum value at $A(x-1,y)$, we copy $B(x-1,y)$ to $B(x,y)$ and append $\pdof{}{0}$. If $A(x,y)$ achieves the minimum value at $A(x-1,y-\scoreof{x}{t})+\cof{x}{t}$ for some $t \in [1,\mdof{x}]$, we set $B(x,y) = B(x-1,y) \cup \curly{\pdof{x}{t}}$. Finally, we output the set at $B(x,y)$ such that $A(x,y)\leq \bud$ and $y$ is maximized.
    
     \paragraph{Correctness} Recurrence ensures that for any entry in $B$, we select at most one project from \dsetof{} for every \pa. While selecting the outcome from 
    $B$, we ensured that its cost is within the budget. Hence, the output of the algorithm will be a valid subset. Optimality follows from the way $A$ is defined.
    
    \paragraph{Running Time} Each row in $A$ corresponds to making a decision about one project from \proj, and hence the number of rows is $m$. Each column corresponds to a possible total score. Since we select only one project from each \dsetof{} and maximum score of any degree of a project is $n$ by definition, maximum total score is $mn$. Thus, we have $mn$ columns. Computing each entry in row $x$ takes $O(\mdof{x})$ time. Thus, the running time is $\boldsymbol{O(m^2n\tstar)}$, where $\tstar = \max_{\pa \in \proj}{\mdof{}}$.
\end{proof}

\subsection{The Rule ${\crule}$}\label{sec: crule}
Recall that \crule outputs a valid set that maximizes the sum of the utilities of voters, where the utility of a voter is: $$\utof{}{} = \suml{\pa: \lowerb{}{} \leq \chocof{}{} \leq \upperb{}{}}{\chocof{}{}}.$$ Talmon and Faliszewski \cite{talmon2019framework} prove that for the case where each project has only one permissible cost, it is \NPH to determine if there exists a feasible subset that guarantees a total utility of at least a given value. Since their model can be modeled as a special case of ours, the hardness easily follows.
\begin{proposition}\label{the: crule-nph}
    For an instance \instance and a value $s$, it is \NPH to check if \crule outputs a set with at least a total utility $s$.
\end{proposition}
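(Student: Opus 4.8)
The plan is to exhibit a reduction from the single-permissible-cost version of the cost-utility problem, whose \NPH{}ness is established by Talmon and Faliszewski \cite{talmon2019framework}, by observing that their model sits inside ours as the special case in which every project admits exactly one nonzero degree. Given any single-cost instance whose projects carry costs and whose voters cast approval ballots, I would build an instance \instance of our model in which each project $\pa$ has $\mdof{} = 1$, so that its degree set is $\dsetof{} = \curly{\pdof{}{0}, \pdof{}{1}}$ with $\cof{}{0} = 0$ (not funding) and $\cof{}{1}$ equal to the original single cost of $\pa$. The budget \bud, the voter set \voters, and the target value $s$ are carried over unchanged.

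The crux is to choose ranged ballots so that the cost utility of our model reproduces exactly the total utility of the single-cost model; conveniently, this is achieved by the default ranged-approval behaviour described in \Cref{sec: intro}. If voter $i$ approves project $\pa$ in the original instance, set $\lowerb{}{} = 0$ and $\upperb{}{} = \cof{}{1}$; if she disapproves it, set $\lowerb{}{} = \upperb{}{} = 0$. For a project with a single nonzero degree the chosen cost \chocof{}{} is either $0$ (degree $\pdof{}{0}$) or $\cof{}{1}$ (degree $\pdof{}{1}$), so in every valid set $S$ an approving voter contributes $\cof{}{1}$ precisely when $\pa$ is funded and $0$ otherwise, while a disapproving voter always contributes $0$ (since $\cof{}{1} > \upperb{}{} = 0$ whenever $\pa$ is funded). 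Summing over projects and voters shows that the total cost utility of any valid $S$ in \instance equals the total utility of the same outcome in the original single-cost instance, so the two decision problems agree on every outcome and in particular on the threshold $s$.

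The reduction is clearly computable in polynomial time, as it only relabels the given instance with $\mdof{} = 1$ for each project and copies the ballots into bounds. Hence a polynomial-time procedure deciding whether \crule outputs a set of total utility at least $s$ would solve the single-cost problem, and the claimed hardness follows. The only point demanding care --- and the nearest thing to an obstacle --- is the bookkeeping of the two boundary cases in the utility definition: checking that a disapproving voter contributes nothing even when the project is funded (forced by $\upperb{}{} = 0 < \cof{}{1}$) and that an approving voter contributes the full cost rather than being zeroed out (forced by $\lowerb{}{} = 0 \le \cof{}{1} = \upperb{}{}$). Once these are verified, the equality of total utilities across the embedding, and thus the correctness of the reduction, is immediate.
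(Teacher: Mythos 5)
Your reduction is exactly the one the paper uses: embed the single-cost instance by giving each project the two degrees $\cof{}{0}=0$ and $\cof{}{1}=c(\pa)$, set $\lowerb{}{}=0$ everywhere and $\upperb{}{}=\cof{}{1}$ precisely for approved projects, and observe that cost utilities coincide. The only difference is that you spell out the correctness check the paper explicitly skips, which is a welcome addition rather than a divergence.
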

\begin{proof}
Given a PB instance with approval votes (each project $j$ has some cost $c(j)$ and every voter reports a subset $A_i$ of projects that she likes), we construct an instance for \crule as follows: For every project $\pa \in \proj$, create exactly two degrees such that $\cof{}{0} = 0$ and $\cof{}{1} = c(\pa)$. For each voter $i$, set $\lowerb{}{} = 0$ for every project \pa and set $\upperb{}{} = \cof{}{1}$ if and only if $\pa \in A_i$. Clearly, both the instances are equivalent and we skip the proof of correctness.
\end{proof}
To cope up with the intractability, Talmon and Faliszewski \cite{talmon2019framework} prove that if every project has only one permissible cost, the problem has a pseudo-polynomial time algorithm and FPTAS. We extend both these results to our model with multiple permissible costs.
\begin{proposition}\label{the: crule-pseudo}
    For any instance \instance, a subset $S \in \valid$ selected under \crule can be computed in pseudo-polynomial time.
\end{proposition}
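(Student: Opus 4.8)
The plan is to recast \crule as a multiple-choice knapsack problem and then solve it by dynamic programming, reusing the template from the proof of \Cref{the: nrule-p} with the per-voter approval score replaced by a per-degree utility value. The first step is to rewrite the objective by swapping the order of summation: for a valid set $S$ we have $\sum_{i \in \voters}\utof{}{} = \sum_{\pa \in \proj}\chocof{}{}\cdot|\{i \in \voters : \lowerb{}{} \le \chocof{}{} \le \upperb{}{}\}|$, so the total utility decomposes into independent per-project contributions. Reusing the quantity $\scoreof{}{}$ from \Cref{the: nrule-p} (the number of voters $i$ with $\lowerb{}{} \le \cof{}{} \le \upperb{}{}$), I define the \emph{value} of a degree as $\qof{}{} = \cof{}{}\cdot\scoreof{}{}$, which is exactly the utility all voters jointly derive when degree \pdof{}{} is selected for project \pa. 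The problem thus reduces to choosing one degree per project so that the total cost is at most \bud while the total value $\sum_{\pa \in \proj}\qof{}{}$ is maximized.

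Next I would build a dynamic program mirroring \Cref{the: nrule-p}, now indexed by value rather than score. Let $A(x,y)$ be the minimum cost of a valid subset of $\bigcup_{j=1}^{x}\dsetof{}$ whose total value is exactly $y$, and maintain the analogous witness table $B$ storing a set realizing each entry. The first row is read off from the degrees of project $1$, and for $x \ge 2$ the recurrence is $A(x,y) = \min\big(A(x-1,y),\ \min_{t=1}^{\mdof{x}}\big(A(x-1,\,y-\qof{x}{t})+\cof{x}{t}\big)\big)$, the first branch taking degree $0$ of project $x$ and the second charging the cost and value of a chosen degree $t$. I then output the set recorded in $B(m,y)$ for the largest $y$ with $A(m,y)\le\bud$. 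Correctness is exactly as in \Cref{the: nrule-p}: the recurrence commits to one degree per project, so after the final budget check the reconstructed set is valid, and optimality follows by induction on $x$ from the definition of $A$ together with the summation swap above.

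For the running time, the table has $m$ rows and one column per relevant total value. Because $\scoreof{}{}\le n$, every valid subset of cost $w$ has total value at most $nw$; hence any value realizable within the budget is at most $n\bud$, and it suffices to keep the columns $y \in \{0,\ldots,n\bud\}$. Each entry is computed in $O(\mdof{x})$ time, giving an overall bound of $O(mn\bud\tstar)$ with $\tstar=\max_{\pa\in\proj}\mdof{}$, which is pseudo-polynomial.

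The step I expect to need the most care is precisely this running-time accounting, not the algorithm. In \Cref{the: nrule-p} the score was bounded by $mn$, polynomial in the input, so the identical template yielded a genuinely polynomial algorithm; here the value carries an extra cost factor and is bounded only by $n\bud$, which is exponential in the encoding length of \bud, so the honest conclusion is pseudo-polynomiality — consistent with the hardness in \Cref{the: crule-nph}. The key is to cap the value range at $n\bud$ as above. An equivalent and slightly cleaner formulation indexes the table directly by cost, with columns $0,\ldots,\bud$ storing the maximum achievable value; I would ultimately prefer this variant, since it carries over most smoothly to the FPTAS in the next result.
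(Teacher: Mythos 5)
Your proposal is correct and follows essentially the same route as the paper: both multiply the per-degree approval count $\scoreof{}{}$ by the cost $\cof{}{}$ to get a per-degree value, reuse the dynamic program of \Cref{the: nrule-p} indexed by total value, bound the reachable values by $n\bud$, and arrive at the same $O(mn\bud\tstar)$ running time. The only additions on your side are the explicit summation-swap justification and the remark about indexing by cost instead, neither of which changes the argument.
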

\begin{proof}
    We multiply every $\operatorname{s}(P_j^t)$ in the proof of Thm. \ref{the: nrule-p} with \cof{}{}. DP tables are also constructed as explained in Thm. \ref{the: nrule-p}. The maximum score achievable by each \pdof{}{} is $n\cof{}{}$. The total score is upper bounded by $n\!\csetof{S}$ since at most a single project from each \degreeset{} is chosen into each $B(x,y)$. This is bounded by $n\bud$ since $S \in \valid$. Thus, the table size is $O(mn\bud)$ and the time is $\boldsymbol{O(mn\tstar\bud)}$, where $\tstar = \max_{\pa \in \proj}{\mdof{}}$.
\end{proof}
\begin{theorem}\label{the: crule-fptas}
    There is an FPTAS for \crule.
\end{theorem}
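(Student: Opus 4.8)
The plan is to build a standard FPTAS on top of the pseudo-polynomial algorithm of \Cref{the: crule-pseudo} using the classical knapsack-style rounding of the per-voter utilities. Recall that in the pseudo-polynomial dynamic program, the ``score'' of a degree $\pdof{}{}$ is $\scoreof{}{} = \sum_{i}\cof{}{}\cdot[\lowerb{}{}\leq\cof{}{}\leq\upperb{}{}]$, i.e.\ the total cost-utility that the selection of degree $\pdof{}{}$ contributes across all voters; the running time $O(mn\tstar\bud)$ is pseudo-polynomial precisely because these scores can be as large as $n\bud$. The key observation is that the total utility $\sum_{i}\utof{}{}$ of any valid set $S$ decomposes additively over the chosen degrees as $\sum_{\pdof{}{}\in S}\scoreof{}{}$, so the problem is exactly an ``optimize additive profit subject to the cost constraint $\csetof{}\leq\bud$ and the validity constraint (one degree per project)'' problem, which is the setting in which FPTAS-by-rounding applies.

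First I would fix the approximation parameter $\varepsilon>0$ and let $s^* = \max_{\pdof{}{}\in\degreeproj}\scoreof{}{}$ be the largest single-degree score; note $s^* \leq \mathrm{OPT}$, where $\mathrm{OPT}$ is the maximum total utility, since any single degree can be extended to a valid set (pad all other projects with their zero-cost degree $\pdof{}{0}$, which is always feasible as it costs nothing). Next I would round each score down to a coarser scale by defining $\widehat{\scorefunction}(\pdof{}{}) = \big\lfloor \scoreof{}{} / K \big\rfloor$ with the scaling factor $K = \varepsilon s^* / (2m)$ (the exact constant is a routine detail to be tuned). Then I would run the dynamic program of \Cref{the: crule-pseudo} verbatim but with the rounded scores $\widehat{\scorefunction}$ in place of $\scorefunction$, and output the set maximizing rounded total score subject to $\csetof{}\leq\bud$ and the one-degree-per-project constraint. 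Because each valid set selects exactly one degree from each of the $m$ projects, the total rounded score of any set lies in $\{0,1,\ldots, m s^*/K\} = \{0,\ldots, O(m^2/\varepsilon)\}$, so the number of score columns is now polynomial in $m$ and $1/\varepsilon$, giving a running time of $O(m\cdot \tstar \cdot m^2/\varepsilon) = O(m^3\tstar/\varepsilon)$.

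For correctness I would carry out the usual two-sided rounding estimate. Let $S$ be the set returned by the rounded DP and $S^{\mathrm{opt}}$ an optimal set. Since the rounded DP is exact on $\widehat{\scorefunction}$, we have $\sum_{\pdof{}{}\in S}\widehat{\scorefunction}(\pdof{}{}) \geq \sum_{\pdof{}{}\in S^{\mathrm{opt}}}\widehat{\scorefunction}(\pdof{}{})$; both sets are valid under the original constraints because the DP never altered the cost function or the validity structure, only the scores. Multiplying through by $K$ and using $K\lfloor \scoreof{}{}/K\rfloor \leq \scoreof{}{} \leq K\lfloor \scoreof{}{}/K\rfloor + K$ on each of the at most $m$ chosen degrees, the rounding error is bounded by $mK = \varepsilon s^*/2 \leq \varepsilon\,\mathrm{OPT}/2$. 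This yields $\sum_{\pdof{}{}\in S}\scoreof{}{} \geq \mathrm{OPT} - mK \geq (1-\varepsilon)\,\mathrm{OPT}$, establishing the approximation guarantee. I would state the final running time as polynomial in $m,n,\tstar$ and $1/\varepsilon$.

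The main obstacle, and the step I would be most careful about, is the validity constraint that each set selects \emph{exactly} one degree per project rather than an arbitrary subset of degrees: this is what distinguishes the problem from vanilla knapsack, and it is crucial that the bound $s^* \leq \mathrm{OPT}$ and the ``at most $m$ chosen degrees'' counting both rely on the fact that every valid set contains precisely $m$ degrees (one per project, possibly the free degree $\pdof{}{0}$). I would verify that the zero-cost degree always provides a feasible padding so that $s^*$ is indeed a valid lower bound on $\mathrm{OPT}$, and that the per-set error accumulates over exactly these $m$ terms. The remaining rounding arithmetic and the book-keeping of the companion set-table $B$ are routine and carry over unchanged from \Cref{the: crule-pseudo}.
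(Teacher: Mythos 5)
Your proposal is correct and follows essentially the same route as the paper's proof: the paper likewise rounds each per-degree score by a factor of $\epsilon M/m$ (with $M=\max_{\pa\in\proj,\,t\in[1,\mdof{}]}\scoreof{}{}\leq OPT$), reruns the dynamic program of \Cref{the: crule-pseudo} on the rounded scores, and charges the rounding error to the at most $m$ chosen degrees to get the $(1-\epsilon)$ guarantee and the same $O(m^3\tstar/\epsilon)$ running time. The one point you flag for verification is handled in the paper by explicitly discarding every degree whose cost prevents it from appearing in any budget-feasible set, which is exactly what your zero-cost-padding argument needs (a degree with $\cof{}{}>\bud$ cannot be padded into a valid set, so it must be removed before defining $s^*$).
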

\begin{proof}
    The idea is inspired from one of the existing FPTAS algorithms of the knapsack problem \cite{ibarra1975fast}. We round the scores of all the projects and use the DP table explained in Thm. \ref{the: nrule-p} on the modified instance. 

    Given an instance \instance, let $M$ be the maximum score of a degree of project, i.e., $M = \max_{\pa \in \proj, t \in [1,\mdof{}]}{\scoreof{}{}}$. We can easily ensure that $M \leq OPT$ by eliminating all the project degrees that cannot be part of any set with cost within \bud. Take any $\epsilon \in (0,1)$. Now, create new scores of all the projects as follows: $\msof{} = \floor*{\frac{s(P^t_j)m}{\epsilon M}}$. 
    We construct the DP tables similar to those in Thm. \ref{the: nrule-p}, considering new scores. 
    By the definition of \msof{}, for every $\pa \in \proj$ and $t \in [1,\mdof{}]$:
    \begin{align}
        \label{eq: nsltos}
        \msof{} &\leq \frac{\scoreof{}{}m}{\epsilon M}\\
        \label{eq: nsgtos}
        \scoreof{}{} &\leq \frac{\epsilon M (\msof{}+1)}{m}
    \end{align}
     Let $S$ be the outcome of this DP algorithm. Say $O$ denotes the optimal solution for \instance under \crule. Please note that the DP ensures that $S$ is a valid set. Since $S$ is the optimal solution for the modified scores,
    \begin{align}
        \label{eq: sgto}
        \suml{j=1}^m{\msof{\chodof{}{O}}} \leq \suml{j=1}^m{\msof{\chodof{}{S}}}
    \end{align}
    Now, we prove that the approximation factor of $(1-\epsilon)$ holds.
    \begin{align}
        \tag{From (\ref{eq: nsgtos})}
        OPT = \suml{j=1}^m{\scoreof{}{\chodof{}{O}}} &\leq \suml{j=1}^m{\frac{\epsilon M (\msof{\chodof{}{O}}+1)}{m}}\\
        \tag{From (\ref{eq: sgto})}
        &\leq \suml{j=1}^m{\frac{\epsilon M \msof{\chodof{}{S}}}{m}}+\epsilon M\\
        \tag{From (\ref{eq: nsltos})}
        &\leq \frac{\epsilon M}{m}\suml{j=1}^m{\frac{\scoreof{}{\chodof{}{S}}m}{\epsilon M}}+\epsilon M\\
        \tag{$\because M \leq OPT$}
        \suml{j=1}^m{\scoreof{}{\chodof{}{S}}} &\geq (1-\epsilon)OPT
    \end{align}
    \textit{Running Time:} The table has $m$ rows. The modified score of each degree of project is upper bounded by $\frac{s(P^t_j)m}{\epsilon M}$, which is further bounded by $\frac{m}{\epsilon}$ due to the definition of $M$. Since the output is a valid set, it has at most $m$ projects and the maximum possible total score is upper bounded by $\frac{m^2}{\epsilon}$, which is the number of columns. Computing each entry in row $x$ takes $O(\mdof{x})$ time. Thus, the running time is $\boldsymbol{O(\frac{m^3\tstar}{\epsilon})}$, where $\tstar = \max_{\pa \in \proj}{\mdof{}}$.
\end{proof}
\subsubsection{Fixed Parameter Tractability}\label{sec: crule-fpt}
Recently, Sreedurga et al. \cite{sreedurga2022maxmin} introduced a new parameter called scalable limit for PB. The authors observed that this value is often small in PB elections in real life (e.g., datasets at https://pbstanford.org/ where budget and all the costs are multiples of some really high value). Motivated by this, we prove that \crule is in FPT w.r.t. scalable limit. Before that, we define the scalable limit in our model as follows:
\begin{definition}\label{def: slimit}
    For any instance \instance, we refer to $\frac{\max_{\pdof{}{} \in \degreeproj}{\cof{}{}}}{\tiny{GCD}(\cof{1}{1},\ldots,\cof{m}{\mdof{m}},\bud)}$ as the \textbf{scalable limit} (\slimit) of the instance.
\end{definition}
Intuitively, we scale down all the costs and budget as much as possible, while ensuring that all these values continue to be integers. The scalable limit then refers to the cost of the costliest degree in $\degreeproj$ in this scaled down instance.
\begin{proposition}\label{the: crule-sl}
    For any instance \instance, computing a subset $S \in \valid$ that is selected under \crule is in \FPT w.r.t. scalable limit \slimit.
\end{proposition}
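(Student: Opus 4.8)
The plan is to rescale the instance so that the budget factor in the running time of \Cref{the: crule-pseudo} collapses to a function of $\slimit$, and then simply invoke that pseudo-polynomial algorithm on the rescaled instance. Write $g = \mathrm{GCD}(\cof{1}{1},\ldots,\cof{m}{\mdof{m}},\bud)$ for the quantity in the denominator of $\slimit$. First I would build a scaled instance $\instance'$ by replacing every degree cost $\cof{}{}$ with $\cof{}{}/g$ and the budget $\bud$ with $\bud/g$. Since $g$ divides every $\cof{}{}$ and also $\bud$, all of these remain nonnegative integers, and the costliest degree of $\instance'$ has cost exactly $\slimit$ by definition. Because each reported bound $\lowerb{}{}$ and $\upperb{}{}$ is itself one of the permissible costs, it too is a multiple of $g$ and rescales identically.

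Next I would argue that the rescaling preserves both feasibility and the selected sets, so that solving $\instance'$ solves \instance. A set $S$ obeys $\csetof{S}\le\bud$ exactly when its scaled cost is at most $\bud/g$, hence \valid is unchanged. The membership condition $\lowerb{}{}\le\chocof{}{}\le\upperb{}{}$ is invariant under dividing all three quantities by $g$, and the contribution of a chosen degree to the \crule objective is precisely $\chocof{}{}$, which scales by $1/g$. Consequently the total utility of every valid set is multiplied by the single constant $1/g$, so the $\arg\max$ is unaffected: a set is selected under \crule for \instance if and only if it is selected under \crule for $\instance'$.

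It then remains to run the dynamic program of \Cref{the: crule-pseudo} on $\instance'$ and bound its size by $\slimit$. A valid set uses at most one degree per project, and every scaled degree costs at most $\slimit$, so the total scaled cost of any valid set is at most $m\slimit$; I may therefore cap the effective budget at $\min(\bud/g,\,m\slimit)\le m\slimit$, which is exactly the step that keeps a huge raw budget from re-entering through $\bud/g$. The table has $m$ rows, its columns are indexed by achievable total score, which is at most $n$ times the total scaled cost and hence at most $nm\slimit$, and computing each entry of row $x$ costs $O(\mdof{x})$. This gives a running time of $O(m^2n\tstar\slimit)$ with $\tstar = \max_{\pa\in\proj}\mdof{}$, which is polynomial in the input and linear in the parameter $\slimit$, so \crule is in \FPT with respect to $\slimit$.

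The part that needs the most care is the invariance argument of the second paragraph: I must verify that every number entering either the feasibility constraint or the \crule objective, namely all degree costs, the budget, and all reported bounds, is an integer multiple of $g$, so that dividing through by $g$ yields a genuinely equivalent integer instance rather than merely an approximation. Everything else is bookkeeping once the effective budget has been capped at $m\slimit$.
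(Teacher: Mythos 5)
Your proposal is correct and takes essentially the same route as the paper: both rescale every cost and the budget by their common GCD so that the costliest degree becomes exactly \slimit, and then run the score-indexed dynamic program of \Cref{the: crule-pseudo}, bounding the number of columns by $mn\slimit$ to obtain $O(m^2 n \slimit \tstar)$. The extra checks you flag (that the reported bounds rescale consistently, and the capping of the effective budget) are harmless but not actually needed, since the bounds are by definition drawn from the permissible costs and the table is indexed by achievable score rather than by cost.
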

\begin{proof}
    We again prove this by constructing a DP table. Let $k =\frac{1}{\tiny{GCD}(\cof{1}{1},\ldots,\cof{m}{\mdof{m}},\bud)}$. First, we get a new instance $\instance'$ by scaling down all the costs and budget as follows: $\cof{}{'t} =k\cof{}{}$ and $\bud' = k\bud$. Then, construct the DP table similar to \Cref{the: nrule-p}. 

    \textit{Running Time:} The table has $m$ rows. The modified cost of each degree of project is upper bounded by \slimit and hence the maximum score possible from each degree of a project is upperbounded by $n\slimit$. Since the output is a valid set, there can be at most $m$ projects in it and hence the maximum possible total score is upper bounded by $mn\slimit$, which is the number of columns. Computing each entry in row $x$ takes $O(\mdof{x})$ time and the running time is $\boldsymbol{O(m^2n\slimit \tstar)}$.
\end{proof}
\begin{proposition}
    For any instance \instance, computing a subset $S \in \valid$ that is selected under \crule is in \FPT w.r.t. $m$.
\end{proposition}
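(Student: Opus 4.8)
The plan is to exploit the fact that, under the parameterization by $m$, a valid set $S \in \valid$ is nothing more than a choice of exactly one degree (possibly the null degree $\pdof{}{0}$) for each of the $m$ projects, subject only to $\csetof{} \le \bud$. Since the cost-indexed dynamic program used for the earlier results scales with $\bud$ (or with \slimit), I would instead attack the problem by \emph{exhaustive search over degree assignments}, so that the only super-polynomial factor is driven by the number of projects. In other words, I view an outcome as one selection from each $\dsetof{}$ and search over all such selections.

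Concretely, I would enumerate every way of choosing one degree $\pdof{}{t} \in \dsetof{}$ for each project $\pa \in \proj$. For each candidate $S$ produced this way the cardinality condition $\cardof{\chodof{}{}} = 1$ holds by construction, so I only need to test $\csetof{} \le \bud$ to decide validity; for each valid candidate I then compute the total utility $\sum_{i \in \voters} \utof{}{}$, which under \crule decomposes as a sum of per-project contributions and is evaluated in $O(mn)$ time once the per-degree scores $\scoreof{}{}$ (weighted by $\cof{}{}$, exactly as in the proof of \Cref{the: crule-pseudo}) have been precomputed in $O(mn\tstar)$ time. I would return a valid candidate of maximum total utility. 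Correctness is immediate: every valid set is exactly one enumerated assignment, so the maximum over the enumeration is the optimum.

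For the running time, the number of enumerated assignments is $\prod_{\pa \in \proj}(\mdof{}+1) \le (\tstar+1)^m$, where $\tstar = \max_{\pa \in \proj}\mdof{}$, and the work per assignment is polynomial. This yields a bound of the form $O\big((\tstar+1)^m \cdot mn\big)$ on top of the one-time score precomputation, i.e.\ a function of $m$ times a polynomial in the input, placing the problem in \FPT with respect to $m$. In the special case where every project has a single permissible cost (the plain approval setting reduced to in \Cref{the: crule-nph}), each factor is $2$ and the branching collapses to the clean knapsack-style bound $2^m$.

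The step requiring the most care is controlling this branching factor: the crude bound $(\tstar+1)^m$ carries the number of degrees $\tstar$ in the base, so to certify a genuinely clean \FPT running time rather than a merely slice-wise polynomial one, I would first preprocess each project down to a controlled collection of candidate degrees --- discarding any degree dominated in the $(\cof{}{},\scoreof{}{})$ trade-off by another of no larger cost and no smaller score, which never changes the optimum since an optimal assignment can always be shifted onto undominated degrees without increasing cost or decreasing utility. Making this reduction bound the per-project candidate count by a function of $m$ alone (or, alternatively, recasting the selection as an integer linear program with $O(m)$ integer variables and invoking Lenstra's algorithm, where the obstacle instead becomes expressing the budget constraint and the objective linearly in the per-project choice) is where the real work of the argument lies.
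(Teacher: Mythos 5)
Your algorithm is the same as the paper's: the paper's entire proof of this proposition is the observation that there are at most $(\tstar)^m$ valid subsets and that the total utility of each under \crule can be evaluated in polynomial time. The substantive content of your write-up is therefore the caveat in your final paragraph, and you are right to be worried: $(\tstar+1)^m$ is \emph{not} of the form $f(m)\cdot\mathrm{poly}(|\instance|)$, since the base $\tstar$ is part of the input, so plain enumeration only establishes membership in XP with respect to $m$. This objection applies verbatim to the paper's own one-line proof, which silently treats $(\tstar)^m$ as an \FPT bound.

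Unfortunately, neither of your proposed repairs closes the gap. Discarding degrees dominated in the $(\cof{}{},\scoreof{}{})$ trade-off leaves the Pareto frontier of each project, which can still contain all $\mdof{}$ degrees; nothing bounds its size by a function of $m$ alone. The ILP route fares no better: expressing ``choose one of $\mdof{}$ arbitrary (cost, score) pairs'' linearly requires on the order of $\mdof{}$ indicator variables per project, so the number of integer variables is $\sum_{\pa}\mdof{}$ rather than $O(m)$, and Lenstra's theorem does not apply. The obstacle appears to be essential rather than technical: with a single voter whose bounds accept every permissible cost, the \crule objective is just the total cost of the chosen degrees, so the problem becomes multiple-choice knapsack with $m$ classes, and deciding whether the optimum equals $\bud$ is the table variant of $k$-SUM with $k=m$, which is \WH with respect to $k$ (Abboud, Lewi, and Williams). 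Hence, unless the costs are polynomially bounded (in which case the pseudo-polynomial DP of \Cref{the: crule-pseudo} already suffices) or $\tstar$ is taken as an additional parameter, the proposition is unlikely to be true as stated, and no proof along these lines --- yours or the paper's --- can succeed.
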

The above proposition follows from the fact the total number of valid subsets is exponential in $m$ (upper bounded by $(\tstar)^m$, where $\tstar = \max_{\pa \in \proj}{\mdof{}}$) and computing the total utility of any set under \crule can be done in polynomial time.

\subsection{The Rule ${\ccaprule}$}\label{sec: ccaprule}
This rule is similar to the previous rule \crule, with the only difference being the way a voter is assumed to view the projects allocated higher cost than the approved limit. The utility is defined as $\utof{}{} = \sum_{\pa \in \proj}{\utof{}{j}}$, where \utof{}{j} is taken as defined in \Cref{sec: utilities}. The proof of the following result is deferred to the appendix.

\begin{proposition}\label{the: ccaprule}
    Given an instance \instance, the following statements hold:
    \begin{enumerate}
        \item For any value $s$, it is \NPH to determine if \ccaprule outputs a set that has a total utility of at least $s$.
        \item A subset $S \in \valid$ that is selected under \ccaprule can be computed in pseudo-polynomial time.
        \item There is an FPTAS for \ccaprule.
        \item Computing a subset $S \in \valid$ that is selected under \ccaprule is in \FPT w.r.t. scalable 
 limit \slimit.
        \item Computing a subset $S \in \valid$ that is selected under \crule is in \FPT w.r.t. $m$.
    \end{enumerate}
\end{proposition}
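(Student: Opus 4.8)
The plan is to exploit the fact that $\ccaprule$ is, like $\crule$, a utilitarian rule whose objective decomposes over projects: for each project $\pa$ and each degree $t$ I would define the \emph{score} $\scoreof{}{}$ to be the total capped utility $\sum_{i \in \voters}{\utof{}{j}}$ that the electorate derives when $\pa$ is funded at degree $t$. Once this score function is in hand, selecting an outcome under $\ccaprule$ is exactly the same multiple-choice knapsack problem solved for $\nrule$ and $\crule$: pick at most one degree per project (degree $0$ meaning ``not funded''), keep the total cost within \bud, and maximise the total score. Thus I would obtain all five parts by re-running the corresponding arguments for $\crule$, with $\scoreof{}{}$ now computed from the capped utility instead of the cost utility.

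For part (1) I would reuse the reduction of \Cref{the: crule-nph}. The only thing to verify is that the construction still encodes the same decision problem, and it does: since that reduction sets $\lowerb{}{}=0$ for every voter and gives each project only the two degrees $\cof{}{0}=0$ and $\cof{}{1}=c(\pa)$, the ``otherwise'' branch of the capped utility contributes $\upperb{}{}=0$ for an unapproved project, while the ``in range'' branch contributes $\cof{}{1}$ for an approved one. Hence the capped utility collapses to exactly the cost utility on every instance in the image of the reduction, and the \NPH reduction transfers unchanged.

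For parts (2)--(5) I would invoke, respectively, the DP of \Cref{the: crule-pseudo}, the rounding scheme of \Cref{the: crule-fptas}, the scaling argument of \Cref{the: crule-sl}, and the brute-force-over-valid-sets argument for the parameter $m$, in each case feeding in the capped score $\scoreof{}{}$. The two facts that keep every running-time and approximation bound identical are: (i) a project left at degree $0$ contributes score $0$, since $\chocof{}{}=0$ forces each voter's capped utility to $0$; and (ii) each degree's score is at most $n\cof{}{}$, since each voter's capped utility from $\pa$ at degree $t$ is one of $0$, $\cof{}{}$, or $\upperb{}{}$ with $\upperb{}{}<\cof{}{}$ in the last case. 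Fact (ii) reproduces the per-degree and total-score bounds used for $\crule$, so the pseudo-polynomial table is again of size $O(mn\bud)$, the rounded table again yields the $(1-\epsilon)$ guarantee after discarding over-budget degrees to force $M\le OPT$, and the scaled instance again has per-degree score at most $n\slimit$.

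The main obstacle, such as it is, lies only in confirming these two structural facts about the capped utility and in checking that when costs are scaled by $k$ the \emph{bounds} $\lowerb{}{},\upperb{}{}$ are scaled consistently (they are, being themselves permissible costs), so that the capped score of the scaled instance equals $k$ times the original capped score and the optimal degree selection is preserved. Once these routine verifications are made, every bound from the $\crule$ section carries over and no new algorithmic idea is required.
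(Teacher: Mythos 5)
Your proposal is correct and follows essentially the same route as the paper: redefine the per-degree score as the summed capped utility, observe it is still bounded by $n\cof{}{}$ (so all table sizes and the $M\le OPT$ normalisation carry over), and note that the \NPH reduction for \crule already produces instances on which the capped and cost utilities coincide. The extra verifications you flag (degree $0$ contributing score $0$, and the bounds scaling consistently under the GCD scaling) are exactly the "minor changes" the paper's proof appeals to.
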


\subsection{The Rule ${\drule}$}\label{sec: drule}
Recall that \drule outputs the feasible set that minimizes $\dutof{}{} = \sum_{\pa \in \proj}{\dutof{}{j}}$, where \dutof{}{j} is defined as in \Cref{sec: utilities}. It is worth bearing in mind that the results from \Cref{sec: crule} and \Cref{sec: ccaprule} cannot be transferred since we have a minimization problem and the fact that we cannot upper bound the disutilities by \bud. We start the section by proving that this rule too is \NPH, using a slightly different but simple reduction from the same problem (used in \Cref{the: crule-nph}) solved by Talmon and Faliszewski \cite{talmon2019framework}. Then, we give a parameterized approximation algorithm, more specifically a parameterized FPTAS, which guarantees the approximation ratio when the parameter is small \cite{feldmann2020survey}.
\begin{proposition}\label{the: drule-nph}
    For an instance \instance and a value $s$, it is \NPH to determine if \drule outputs a set that has a total disutility of at most $s$.
\end{proposition}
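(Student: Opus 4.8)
The plan is to reduce from the same \NPH{} approval-based PB problem used in the proof of \Cref{the: crule-nph}: given projects with costs $c(\pa)$, approval sets $A_i \subseteq \proj$ for each voter $i \in \voters$, a budget \bud, and a value $s$, decide whether some feasible $S$ (with cost at most \bud) achieves total cost-utility $\sum_{\pa \in S} n_j\, c(\pa) \ge s$, where $n_j$ denotes the number of voters approving \pa. Since \drule is a \emph{minimization} rule, the central idea is to design the ranged votes so that the total disutility of a valid set equals a fixed additive constant minus exactly this cost-utility; this flips the original maximization into a minimization and lets the threshold transfer directly.

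First I would build the \drule instance \instance. For every project I would create two degrees with $\cof{}{0} = 0$ and $\cof{}{1} = c(\pa)$, keeping the same budget \bud. For the bounds, I would set, for each voter $i$ and project \pa, $\lowerb{}{} = \upperb{}{} = \cof{}{1}$ whenever $\pa \in A_i$, and $\lowerb{}{} = 0$, $\upperb{}{} = \cof{}{1}$ whenever $\pa \notin A_i$. Both bounds lie in the permissible cost set and satisfy $\lowerb{}{} \le \upperb{}{}$, so this is a legal instance, and it is clearly of polynomial size. Selecting degree $1$ of a project corresponds to funding it and degree $0$ to rejecting it, so valid sets $S \in \valid$ correspond exactly to the feasible subsets of the original instance.

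Next I would compute the total disutility of any valid $S$ by a four-case bookkeeping over (approved / not approved) $\times$ (funded / rejected). For an approved \pa (so $\lowerb{}{} = \upperb{}{} = \cof{}{1}$), funding it gives $\chocof{}{} = \cof{}{1}$, which lands inside the range and contributes $0$, while rejecting it gives $\chocof{}{} = 0 < \lowerb{}{}$ and contributes $\lowerb{}{} - 0 = \cof{}{1}$. For a non-approved \pa (so the range is $[0,\cof{}{1}]$), both funding and rejecting land inside the range and contribute $0$. Summing over all voters and projects, the total disutility $\sum_{i \in \voters} \dutof{}{}$ equals $\sum_{\pa \notin S} n_j\, c(\pa) = C - \sum_{\pa \in S} n_j\, c(\pa)$, where $C = \sum_{\pa \in \proj} n_j\, c(\pa)$ is a constant fixed by the construction. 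Hence the minimum total disutility over valid sets is $C$ minus the maximum total cost-utility, and \drule outputs a set with total disutility at most $C - s$ if and only if the original instance admits a feasible subset with cost-utility at least $s$; setting the threshold to $C - s$ completes the reduction.

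The main obstacle is exactly the feature flagged in the surrounding text: \drule is a minimization objective whose disutilities are not bounded by \bud, so the pseudo-polynomial-time and FPTAS machinery of \Cref{the: crule-pseudo} and \Cref{the: crule-fptas} does not carry over. The work in the reduction is to neutralize this by choosing the non-approved bounds so that they contribute exactly zero in both the funded and rejected cases, which is what keeps $C$ a clean additive constant and makes the minimization a faithful mirror of the maximization. Once the four-case accounting is verified and no stray term appears, the equivalence and the polynomial bound on the size of \instance are immediate.
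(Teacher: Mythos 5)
Your reduction is correct and is essentially identical to the paper's own proof: the same source problem, the same two-degree gadget with $\cof{}{0}=0$ and $\cof{}{1}=c(\pa)$, the same bounds ($\lowerb{}{}=\upperb{}{}=c(\pa)$ for approved projects, $[0,c(\pa)]$ otherwise), and the same observation that the total disutility equals the constant $Z=\sum_{i\in\voters}\csetof{A_i}$ minus the cost-utility, giving the threshold $Z-s$. Your four-case accounting is in fact spelled out a bit more carefully than the paper's.
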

\begin{proof}
For a PB instance with approval votes (each project $j$ has some cost $c(j)$ and every voter reports a subset $A_i$ of projects that she likes) and a positive value $x$, we construct an instance for \crule as follows: For every project $\pa \in \proj$, create exactly two degrees such that $\cof{}{0} = 0$ and $\cof{}{1} = c(\pa)$. For each voter $i$, set $\upperb{}{} = c(\pa)$ for every project \pa and set $\lowerb{}{} = \cof{}{1}$ if and only if $\pa \in A_i$. Let $Z = \sum_{i \in \voters}{\csetof{A_i}}$. Set $s = Z-x$.

\paragraph{Correctness} To prove the forward direction, say the approval-based PB instance is a \yes instance. That implies, there exists a set $S$ of projects such that $\sum_{i \in \voters}{\csetof{S \cap A_i}} \geq x$. Now calculate the disutility of $S$ under \drule. Then, $\dutof{}{j} = c(\pa)$ for every $j \in A_i \setminus S$ and $0$ otherwise. Therefore, $\dutof{}{} = -c(A_i)-c(A_i \cap S)$. Then, total utility from $S$ = $Z-\sum_{i \in \voters}{c(A_i \cap S)}$. Since the former is at least x, this total disutility is at most Z-x. The backward direction can be argued similarly.
\end{proof}
\subsubsection{Parameterized Approximation Algorithm (FPTAS)}\label{sec: drule-pfptas}
The very high disutilities that are hard to be bound motivate us to propose a \emph{parameterized FPTAS}, which has been of a great interest recently \cite{feldmann2020survey}. We consider a parameter that we call \emph{\variance}, \vpara, which, on being given an instance \instance, intuitively shows how divergent the disutilities of different degrees of projects in the instance are. It is captured by measuring the highest disutility a single project can have, relative to sum of the least possible disutilities from all the projects. We explain this formally below.

Given an instance \instance, for each degree of a project $\pdof{}{} \in \degreeproj$, we use $\qof{}{}$ to denote the disutility the project will contribute to any set that contains it. We call this \emph{disutility contribution} of \pdof{}{}. Suppose for example, $\pdof{}{} \in S$. That implies, $\chocof{}{} = \cof{}{}$ ($\because S$ is valid). Therefore, $$\qof{}{} = \sum_{i: \cof{}{}<\lowerb{}{}}{(\lowerb{}{}-\cof{}{})}+\sum_{i: \upperb{}{}<\cof{}{}}{(\cof{}{}-\upperb{}{})}.$$ It can be observed that, $\sum_{i \in \voters}\dutof{}{} = \sum_{\pdof{}{} \in S}{\qof{}{}}$.

Let $\qmax = \max_{\pdof{}{} \in \degreeproj}{\qof{}{}}$. That is, $\qmax$ is the maximum of disutility contributions of all the degrees of all the projects. We use $\qsum$ to denote the sum of minimum disutility contributions from each \dsetof{}, i.e., $\qsum = \sum_{\pa \in \proj}{\min_{t \in \dsetof{}}{\qof{}{}}}$. Our parameter \vpara is the ratio $\qmax/\qsum$. Intuitively, this parameter means that we want the disutility contribution from a single project not to be extremely higher than the sum of the least disutility contributions from all the $m$ projects.
\begin{theorem}
    For any instance \instance and $\epsilon\in [0,1]$, a subset $S \in \valid$ such that $\sum_{i \in \voters}{\dutof{}{}} \leq (1+\epsilon)OPT$ can be computed in $O(\frac{m^3\vpara\tstar}{\epsilon})$ time, where $OPT$ is the optimal possible total disutility under \drule, \vpara is the \variance, and $\tstar = \max_{\pa \in \proj}{\mdof{}}$. 
\end{theorem}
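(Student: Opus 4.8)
The plan is to mirror the dynamic-programming-plus-rounding strategy of \Cref{the: crule-fptas}, but adapted to a \emph{minimization} objective and, crucially, scaled against a different reference quantity. First I would set up a DP table exactly as in \Cref{the: nrule-p}, except that the quantity accumulated along the columns is the disutility contribution \qof{}{} rather than a score: let $A(x,y)$ be the minimum cost of a valid selection of one degree from each of the first $x$ projects whose total disutility contribution is exactly $y$, computed by the recurrence $A(x,y)=\min_{t=0}^{\mdof{x}}\big(A(x-1,\,y-\qof{x}{t})+\cof{x}{t}\big)$. Note that degree $0$ must now appear explicitly in the minimisation, since unlike under \nrule it may contribute positive disutility. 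The output is the set stored for the smallest $y$ with $A(m,y)\le\bud$, which is the valid set of minimum total disutility, since $\sum_{i\in\voters}\dutof{}{}=\sum_{\pdof{}{}\in S}\qof{}{}$ and the cost constraint is enforced column-wise.

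The main obstacle, and the reason the parameter \vpara is needed, is that for a minimization problem $OPT$ can be arbitrarily small (even $0$), so the device of \Cref{the: crule-fptas}---scaling by the largest item value $M$ and invoking $M\le OPT$---does not transfer. The key observation is that \qsum, the sum over projects of the cheapest-in-disutility degree, is a \emph{computable lower bound} on $OPT$: every valid set picks exactly one degree per project, so $OPT=\sum_{\pdof{}{}\in O}\qof{}{}\ge\sum_{\pa\in\proj}\min_{t}\qof{}{}=\qsum$. I would therefore round against \qsum: set $K=\frac{\epsilon\qsum}{m}$, replace each \qof{}{} by $\floor*{\qof{}{}/K}=\floor*{\frac{\qof{}{}m}{\epsilon\qsum}}$, and run the DP above on these rounded contributions (the costs \cof{}{}, and hence validity, are untouched).

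For the guarantee I would let $O$ be the true optimum and $S$ the DP output, so $\sum_{\pdof{}{}\in S}\floor*{\qof{}{}/K}\le\sum_{\pdof{}{}\in O}\floor*{\qof{}{}/K}$ by optimality of $S$ for the rounded contributions. Using $K\floor*{\qof{}{}/K}\le\qof{}{}<K\big(\floor*{\qof{}{}/K}+1\big)$ together with $|S|\le m$, a short chain gives $\sum_{\pdof{}{}\in S}\qof{}{}<K\sum_{\pdof{}{}\in O}\floor*{\qof{}{}/K}+mK\le\sum_{\pdof{}{}\in O}\qof{}{}+mK=OPT+\epsilon\qsum\le(1+\epsilon)OPT$, where the final step is exactly where $\qsum\le OPT$ enters. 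For the running time, each rounded contribution is at most $\frac{\qmax m}{\epsilon\qsum}=\frac{m\vpara}{\epsilon}$, so the total rounded disutility---and hence the number of columns---is $O(\frac{m^2\vpara}{\epsilon})$; with $m$ rows and $O(\mdof{x})$ work per entry this yields $O(\frac{m^3\vpara\tstar}{\epsilon})$, as claimed.

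The one remaining wrinkle is the degenerate case $\qsum=0$, where \vpara is undefined: then every project admits a zero-disutility degree, and I would dispatch it separately by testing feasibility of the minimum-disutility selection, which either realizes $OPT=0$ directly or falls outside the regime in which \vpara is meaningful. Modulo this boundary case, the substance of the proof is the identification of \qsum as a tight, efficiently computable surrogate for $OPT$ in the minimization setting, which simultaneously certifies the approximation ratio and, through $\vpara=\qmax/\qsum$, bounds the size of the DP table.
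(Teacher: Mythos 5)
Your proposal is correct and follows essentially the same route as the paper's proof: the same rounding $\alpha(P_j^t)=\lfloor q_j^t m/(\epsilon q_\sigma)\rfloor$, the same dynamic program over rounded disutility contributions with the output taken at the smallest feasible column, the same key inequality $OPT \geq q_\sigma$ to convert the additive error $\epsilon q_\sigma$ into a $(1+\epsilon)$ factor, and the same running-time count. The only additions---making degree $0$ explicit in the recurrence and flagging the degenerate case $q_\sigma = 0$---are minor refinements the paper leaves implicit.
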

\begin{proof}
    The idea is similar to that in Thm. \ref{the: crule-fptas}. We define new disutilities for each project by rounding their disutility contributions. We then construct a DP table using which we select a set. For each project \pdof{}{}, we define \msof{} as follows: $$\msof{} = \floor*{{\qof{}{}m}/{\epsilon\qsum}}.$$
    
    We construct the DP tables $A$ and $B$ similar to those in Thm. \ref{the: nrule-p}, with a slight change the each column now represents the total disutility, and \scoreof{}{} is replaced by \msof{} defined in the above para. 
    We output the set at $B(x,y)$ such that $A(x,y) \leq \bud$ and $y$ is \emph{minimized}. Let $S$ be the resultant outcome.
    Say $O$ denotes the optimal solution for \instance under \drule. Please note that the DP ensures that $S$ is a valid set. Since $S$ is the optimal solution w.r.t. new disutilities (i.e., w.r.t. $\alpha$'s),
    \begin{align}
        \label{eq: sgto'}
        \suml{j=1}^m{\msof{\chodof{}{S}}} \leq \suml{j=1}^m{\msof{\chodof{}{O}}}
    \end{align}
    By the definition of $\alpha$, we have
    \begin{align}
        \label{eq: ndltod}
        \qof{}{} &\geq \frac{\epsilon\qsum\msof{}{}}{m}\\
        \label{eq: ndgtod}
        \msof{}{} &\geq \frac{\qof{}{}m}{\epsilon\qsum}-1
    \end{align}
    Now, we prove that the approximation factor of $(1+\epsilon)$ holds.
    \begin{align}
        \tag{From (\ref{eq: ndltod})}
        OPT = \suml{j=1}^m{\qof{}{\chodof{}{O}}} &\geq \suml{j=1}^m{\frac{\epsilon \qsum \msof{\chodof{}{O}}}{m}}\\
        \tag{From (\ref{eq: sgto'})}
        &\geq \suml{j=1}^m{\frac{\epsilon \qsum \msof{\chodof{}{S}}}{m}}\\
        \tag{From (\ref{eq: ndgtod})}
        &\geq \left(\frac{\epsilon \qsum}{m}\suml{j=1}^m{\frac{m\qof{}{\chodof{}{S}}}{\epsilon \qsum}}\right)-\epsilon \qsum\\
        \label{eq: lastbutone}
        \suml{j=1}^m{\qof{}{\chodof{}{S}}} &\leq OPT+\epsilon \qsum
    \end{align}
\qsum is the sum of minimum possible disutility contribution from each \dsetof{}. Any valid set should have one degree (degree $0$ corresponds to not funding the project) from each \dsetof{}. Since the optimal solution has to be a valid set, the optimal disutility must be at least \qsum, i.e., $OPT \geq \qsum$. Applying this in Eqn. \ref{eq: lastbutone}, results in $\sum_{j=1}^m{\qof{}{\chodof{}{S}}} \leq (1+\epsilon)OPT$.

\paragraph{Running Time} Now, we need to check the running time of constructing the DP table. The table has $m$ rows. The disutilities calculated for each degree of project is upper bounded by $\frac{\qof{}{}m}{\epsilon\qsum}$, which is further bounded by $\frac{m\vpara}{\epsilon}$ (by the definition of $\vpara$, $\because \qof{}{}\leq \qmax$). Since the output is a valid set, there can be at most $m$ projects in the output and hence the maximum possible total disutility is upper bounded by $\frac{m^2\vpara}{\epsilon}$, which is the number of columns. Computing each entry in row $x$ takes $O(\mdof{x})$ time. Thus, the running time is $\boldsymbol{O(\frac{m^3\vpara\tstar}{\epsilon})}$, where $\tstar = \max_{\pa \in \proj}{\mdof{}}$.
\end{proof}
\paragraph{Fixed Parameter Tractability} Here, we prove that \drule is also in FPT with respect to the parameters \slimit or $m$.
\begin{proposition}\label{the: drule-sl}
    For any instance \instance, computing a subset $S \in \valid$ that is selected under \drule is in \FPT w.r.t. scalable limit \slimit.
\end{proposition}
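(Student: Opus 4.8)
The plan is to mirror the dynamic program of \Cref{the: crule-sl}, letting the total \emph{disutility contribution} $\qof{}{}$ play the role that the cost-score played there. First I would scale the instance: set $k = 1/\mathrm{GCD}(\cof{1}{1},\ldots,\cof{m}{\mdof{m}},\bud)$ and form $\instance'$ with $\cof{}{'t} = k\cof{}{}$ and $\bud' = k\bud$. Since every reported bound $\lowerb{}{},\upperb{}{}$ lies in the permissible cost set, the bounds are scaled by the same factor $k$, so for each degree the partition of voters into ``below'', ``within'', and ``above'' is unchanged, and consequently every disutility contribution satisfies $\qof{}{'t} = k\,\qof{}{}$. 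As all contributions are scaled by the single positive factor $k$, a valid set minimizes the total disutility in $\instance'$ if and only if it does so in $\instance$; hence it suffices to solve $\instance'$.

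Next I would bound the number of DP columns. In $\instance'$ every cost is at most $\slimit$, so for any degree $\pdof{}{}$ each voter contributes to $\qof{}{'t}$ a difference of (scaled) costs bounded by $\slimit$, whence $\qof{}{'t} \le n\slimit$. Since a valid set picks exactly one degree from each of the $m$ projects, its total disutility $\sum_{\pdof{}{}\in S}\qof{}{'t}$ is at most $mn\slimit$, and it is a nonnegative integer (all scaled costs, hence all differences, are integers). I would therefore build a table $A(x,y)$ holding the cheapest cost of a valid choice over the first $x$ projects whose total disutility is exactly $y$, with $y$ ranging over $0,\ldots,mn\slimit$, together with the companion table $B$ storing the witnessing sets, exactly as in \Cref{the: nrule-p}. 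The one structural difference from \crule is that the degree $\pdof{}{0}$ (not funding) may itself carry a positive disutility contribution $\qof{}{0} = \sum_{i \in \voters}\lowerb{}{}$; so the recurrence must treat degree $0$ as a genuine option that advances the disutility column by $\qof{x}{0}$ rather than by $0$. I would then output the set stored at $B(m,y)$ with $A(m,y)\le\bud'$ and $y$ \emph{minimized}.

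Correctness follows as in \Cref{the: nrule-p}: the recurrence forces exactly one degree per project and the final selection respects $\bud'$, so the output is valid, while scanning for the smallest feasible $y$ returns a minimum-disutility valid set of $\instance'$, and hence of $\instance$. For the running time, the table has $m$ rows, $mn\slimit + 1$ columns, and each entry costs $O(\tstar)$ to fill, giving $O(m^2 n\slimit\tstar)$, which is \FPT w.r.t.\ \slimit.

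The main obstacle is precisely the one flagged at the start of \Cref{sec: drule}: unlike the utility rules, the disutilities are a priori unbounded and in particular are \emph{not} bounded by \bud, so the budget cannot be used to bound the DP dimension. The observation that rescues the approach is that scaling by the GCD bounds every cost --- and therefore every pairwise cost difference appearing in a disutility contribution --- by \slimit, which is exactly what caps the total disutility (the column range) at $mn\slimit$. The only additional care needed is to remember that the zero degree can contribute positively to the disutility, so it must enter the recurrence as a bona fide choice rather than a free ``skip''.
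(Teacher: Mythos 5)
Your proposal is correct and follows essentially the same route as the paper, which simply scales the instance by the GCD as in \Cref{the: crule-sl} and reuses the dynamic program with disutility contributions indexing the columns. In fact your write-up is more careful than the paper's two-line sketch: the bound $q_j^{t} \le n\slimit$ via scaled cost differences and the observation that the degree $\pdof{}{0}$ may carry positive disutility (and so cannot be treated as a free skip in the recurrence) are exactly the details the paper leaves implicit.
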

The proof of the above proposition is similar to the proof of \Cref{the: crule-sl}. We scale down the costs of the projects and the budget as explained in that proof. Followed by this, the DP table and the output will be computed as explained in \Cref{the: drule-sl}. The running time analysis is exactly the same as explained in \Cref{the: crule-sl}

\begin{proposition}
    For any instance \instance, computing a subset $S \in \valid$ that is selected under \drule is in \FPT w.r.t. $m$.
\end{proposition}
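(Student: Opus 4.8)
The plan is to mirror the brute-force argument used for the analogous proposition on \crule. First I would observe that a valid set is completely determined by a choice of exactly one degree from each project's degree set \dsetof{}: since $\cardof{\chodof{}{}} = 1$ for every $\pa \in \proj$ in any valid set, and the degree $\pdof{}{0}$ encodes ``not funding the project'', the number of candidate sets, ignoring the budget constraint, is $\prod_{\pa \in \proj}(\mdof{}+1) \le (\tstar+1)^m$, where $\tstar = \max_{\pa \in \proj}{\mdof{}}$. This count is a function of $m$ and $\tstar$ alone, which is exactly what an \FPT argument in $m$ needs (treating $\tstar$ as part of the input).

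The algorithm I would run is then straightforward enumeration. I would iterate over all $(\tstar+1)^m$ candidate sets; for each candidate $S$ I would first test feasibility by checking $\csetof{} \le \bud$ and discard it if it fails, and otherwise compute its total disutility. Using the identity $\sum_{i \in \voters}{\dutof{}{}} = \sum_{\pdof{}{} \in S}{\qof{}{}}$ noted earlier, the total disutility of $S$ is just the sum of the disutility contributions \qof{}{} of its chosen degrees; since each \qof{}{} is a sum of at most $n$ case-based per-voter terms, all contributions can be precomputed once in $O(mn\tstar)$ time, after which each candidate is evaluated in $O(m)$ time. Finally I would output the feasible candidate minimizing the total disutility. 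The overall running time is $O(mn\tstar + (\tstar+1)^m \cdot m)$, which is of the form $f(m) \cdot \text{poly}$, and hence places the problem in \FPT with respect to $m$.

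I expect no genuine obstacle here, but it is worth flagging the one point that distinguishes this argument from the pseudo-polynomial and FPTAS results of the preceding subsections: those approaches index the dynamic programming table by disutility values and therefore break down because the disutilities cannot be bounded by \bud, whereas the enumeration above never needs to bound the magnitude of any disutility---it computes each total disutility exactly and compares---so the unboundedness that obstructed the earlier dynamic programs is simply irrelevant here. The only mild care required is to confirm that each candidate's disutility is computable in polynomial time, which is immediate from the definition of \dutof{}{j} as a sum of simple per-voter, per-project case-based quantities.
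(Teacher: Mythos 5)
Your proposal takes essentially the same route as the paper, which likewise settles this proposition by exhaustively enumerating all valid sets (the same $(\tstar)^m$-type count it invokes for the analogous \crule statement) and evaluating the disutility of each in polynomial time. Your extra details---precomputing the contributions \qof{}{} and observing that the unbounded disutilities which obstructed the DP-based results are irrelevant to plain enumeration---are correct elaborations of that same argument rather than a different approach.
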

The above result follows from the fact that when $m$ is constant, we can exhaustively search all the valid sets for the set with optimal disutility.

\section{Budgeting Axioms}\label{sec: axioms}
An enormous amount of work has been done on the axiomatic study of PB with approval votes \cite{aziz2018proportionally,aziz2019proportionally,talmon2019framework,rey2020designing,baumeister2020irresolute,sreedurga2022maxmin}. However, our PB model in which each project has a set of permissible costs is unique technically as well as realistically. This uniqueness demands the development of novel axioms applicable explicitly in such a setting. We introduce different axioms and investigate which of the PB rules introduced in \Cref{sec: utilities} satisfy our axioms and which do not. We defer some proofs in this section to appendix.

The first axiom implies that if the voters are narrowing down their interval towards a degree that was winning, then the degree still continues to win.
\begin{definition}[Shrink-resistant]\label{def: shrink}
	A PB rule \pbrule is said to be \emph{shrink-resistant} if for any instance \instance, a voter \ii, any $\pa \in \proj$, it holds that a set $S$ selected under \pbrule continues to be selected even if \lowerb{}{} and \upperb{}{} are shifted closer to \chocof{}{}.
\end{definition}
\begin{proposition}\label{the: shrink}
	All the four rules \nrule, \crule, \ccaprule, and \drule are shrink-resistant.
\end{proposition}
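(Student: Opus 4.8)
The plan is to deduce all four cases from one uniform monotonicity observation, so that the only rule-specific work is a short case check. Fix an instance \instance, a selected set $S\in\valid$, a voter \ii, and a project \pa, and let $[\lowerb{}{},\upperb{}{}]$ be her reported interval. Reading ``narrowing the interval towards the winning degree'' literally, the lower bound is raised and the upper bound is lowered towards \chocof{}{}, so the new bounds $\ell',h'$ satisfy $\lowerb{}{}\le\ell'\le\chocof{}{}\le h'\le\upperb{}{}$; in particular the new interval is contained in the old one and still contains \chocof{}{}. The crucial structural point is that this modification touches only the contribution of \pa to voter \ii's (dis)utility: every other voter, every other project, and the set \valid of valid outcomes are all unaffected, so it suffices to track this single contribution.

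First I would record what happens to $S$ itself. Since \chocof{}{} remains inside $[\ell',h']$, voter \ii's contribution from \pa is literally unchanged under each rule (it stays $1$ for \nrule, \chocof{}{} for \crule and \ccaprule, and $0$ for \drule), so the total utility (resp.\ disutility) of $S$ is identical in the modified instance. The heart of the argument is the complementary claim: for every competing valid set, with $x$ denoting the cost it assigns to \pa, voter \ii's contribution from \pa can only move unfavourably, i.e.\ weakly decrease for \nrule, \crule, \ccaprule and weakly increase for \drule. I would prove this pointwise in $x$ using $[\ell',h']\subseteq[\lowerb{}{},\upperb{}{}]$: for \nrule and \crule the contribution is positive exactly on the interval and zero outside, so shrinking can only erase a positive value; for \ccaprule I would split into $x<\ell'$, $\ell'\le x\le h'$, $x>h'$ and verify the capped value never exceeds the old one (the only nonobvious region being $x>h'$, where the cap drops to $h'\le\min(x,\upperb{}{})$); and for \drule I would symmetrically check in the same three regions that the distance to the narrowed interval is at least the distance to the original one.

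With these two facts in hand the conclusion is immediate for all four rules. In the modified instance the objective value of $S$ equals its old value, while the objective value of every other valid set is weakly worse (smaller utility, or larger disutility); since $S$ was optimal before the change, it remains optimal after it, and hence is still selected. The same chain works verbatim for the three maximisation rules and, with the inequalities reversed, for the minimisation rule \drule.

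The only delicate part is the pointwise verification for \ccaprule and \drule, whose piecewise definitions force a short region-by-region comparison near the two moving endpoints; for \nrule and \crule the monotonicity is immediate from set inclusion. I do not expect any genuine obstacle beyond this bookkeeping, since the entire content is captured by the observation that narrowing towards the winning cost freezes $S$'s contribution while it can only hurt every alternative.
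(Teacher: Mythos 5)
Your argument is sound and essentially coincides with the paper's on the case it covers, but it silently restricts the axiom to the sub-case $\lowerb{}{} \leq \chocof{}{} \leq \upperb{}{}$. The definition only says that the two bounds are ``shifted closer to $\chocof{}{}$'', and the paper's proof explicitly treats the complementary situation where initially $\chocof{}{} < \lowerb{}{}$ or $\upperb{}{} < \chocof{}{}$. There, ``closer to $\chocof{}{}$'' forces both bounds to move in the \emph{same} direction, the new interval $[\ell',h']$ is a translate rather than a subset of $[\lowerb{}{},\upperb{}{}]$, and your central monotonicity claim collapses: if, say, $\chocof{}{} < \ell' \leq \lowerb{}{}$, a competing valid set $S'$ whose cost $\chocof{}{S'}$ for \pa lands in $[\ell',\lowerb{}{})\cap[0,h']$ has just become acceptable to voter \ii, so its utility under \nrule or \crule strictly \emph{increases} while that of $S$ need not. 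So ``every alternative can only get weakly worse'' is false in this regime, and your proof as written does not establish the proposition in the generality in which the paper states and proves it. You would at minimum have to add this second case and argue it separately; the paper disposes of it in one line by asserting that $\utof{}{}$ can only stay the same or grow while sets not containing \chodof{}{} are unaffected --- an assertion you should scrutinise rather than copy, since the translated-interval phenomenon above is exactly where it is under strain.

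Within the sub-case you do treat, your bookkeeping is correct and in fact more carefully justified than the paper's: the containment $[\ell',h']\subseteq[\lowerb{}{},\upperb{}{}]$ together with $\chocof{}{}\in[\ell',h']$ freezes the contribution of $S$ and weakly damages every competitor, and your region-by-region checks for \ccaprule and \drule are exactly the right verification. The fix needed is therefore not to the structure of your argument but to its scope: either state explicitly that you read ``narrowing towards the winning degree'' as presupposing $\lowerb{}{} \leq \chocof{}{} \leq \upperb{}{}$ (a reading the paper's own proof does not adopt), or supply the missing case.
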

Given an instance \instance, for each project $\pa\in \proj$, we define $\conrange = \curly{\cof{}{}: \pdof{}{} \in \degreeproj,\;\forall i \in \voters\;\; \lowerb{}{} \leq \cof{}{} \leq \upperb{}{}}$. Let $\maxconrange = \max{(\conrange)}$ and $\minconrange = \min{(\conrange)}$. The second axiom requires that if some range of costs is found to be acceptable unanimously by all the voters, then the cost allocated to the project must not go beyond this range.
\begin{definition}[Range-abiding]\label{def: rangeabiding}
	A PB rule \pbrule is said to be \emph{range-abiding} if for any instance \instance, a project $\pa \in \proj$, and a set $S$ selected under \pbrule, it holds that $$\conrange \neq \emptyset \implies \chocof{}{} \leq \maxconrange.$$
\end{definition}
\begin{proposition}\label{the: range-abiding}
	The rules \nrule and \drule are range-abiding, whereas \crule and \ccaprule are not.
\end{proposition}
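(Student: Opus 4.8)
The plan is to treat the two positive cases (\nrule and \drule) by a single local exchange argument, and to refute the two negative cases (\crule and \ccaprule) by one explicit instance. For \nrule and \drule, suppose $\conrange \neq \emptyset$ and let $\pdof{}{t^*}$ be a degree of \pa attaining cost $\maxconrange$; by definition of \conrange every voter accepts this cost, i.e.\ $\lowerb{}{} \leq \maxconrange \leq \upperb{}{}$ for all $i \in \voters$. Assume for contradiction that some selected set $S$ has $\chocof{}{} > \maxconrange$, and let $S'$ be obtained from $S$ by replacing the chosen degree of \pa with $\pdof{}{t^*}$. Since $\csetof{S'} = \csetof{S} - \chocof{}{} + \maxconrange < \csetof{S} \leq \bud$, the set $S'$ is again valid, so it suffices to show the exchange strictly improves the objective. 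For \nrule, $\pdof{}{t^*}$ has score exactly $n$ (all voters accept $\maxconrange$), whereas the degree of cost $\chocof{}{} > \maxconrange$ is not unanimously acceptable — because $\maxconrange$ is the \emph{largest} unanimously acceptable cost — and hence has score at most $n-1$; the number of satisfied voters strictly increases. For \drule, the disutility contribution $\qof{}{t^*}$ is $0$, while the degree of cost $\chocof{}{}$ has strictly positive contribution: no voter can reject it from below (as $\lowerb{}{} \leq \maxconrange < \chocof{}{}$), yet some voter must reject it from above since $\chocof{}{} \notin \conrange$; thus total disutility strictly decreases. Either way $S$ is not optimal, a contradiction.

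For \crule and \ccaprule I would exhibit one instance in which a project is funded beyond its unanimous range. Take $n \geq 3$ voters and a single project \pa with two positive degrees of costs $1$ and $c$ (say $c = 2$); let one voter report $\upperb{}{} = 1$ (rejecting the expensive degree) while the remaining $n-1$ voters accept both costs, with every $\lowerb{}{} = 0$. Then $1 \in \conrange$ but $c \notin \conrange$, so $\maxconrange = 1$. Under \crule the expensive degree collects total score $(n-1)c$ against $n$ for the cheap degree, and under \ccaprule it collects $(n-1)c + 1$ against $n$ (the rejecting voter still contributes $\upperb{}{} = 1$); for $n \geq 3$, $c = 2$ both quantities strictly exceed $n$. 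Hence the selected set allocates cost $c > 1 = \maxconrange$, showing neither rule is range-abiding.

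The main obstacle is the degenerate boundary case $\maxconrange = 0$, i.e.\ when the only unanimously acceptable cost is $0$ and the exchange target $\pdof{}{t^*}$ is the null degree. For \drule this causes no difficulty: $0 \in \conrange$ forces $\lowerb{}{} = 0$ for every voter, so the null degree has disutility contribution $0$ and the exchange still strictly decreases disutility, giving $\chocof{}{} = 0 = \maxconrange$. For \nrule, however, the null degree yields utility $0$ by definition (the clause $\chocof{}{} \neq 0$), so replacing a positively-funded degree by it need not improve the objective; this single-point-range case does not follow from the exchange argument and must be handled separately, and it is the delicate point of the proposition.
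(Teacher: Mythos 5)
Your argument follows essentially the same route as the paper: an exchange argument for \nrule and \drule (swap the chosen degree for the one of cost \maxconrange, observe validity is preserved because the cost only drops, and argue the objective strictly improves), and a single explicit one-project instance where a non-unanimous expensive degree outscores the unanimous cheap one for \crule and \ccaprule. Your counterexample (majority approving the expensive degree, one dissenter) is a mirror image of the paper's (one voter approving the expensive degree, with the cheap cost chosen as $\lfloor(\bud-1)/n\rfloor$ so that even unanimous support totals at most $\bud-1$); both work.

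The boundary case you flag is a genuine gap, and it is present in the paper's own proof as well. The paper asserts that the degree swapped in ``gives the optimal utility of $n$ for \nrule,'' but when $\maxconrange = 0$ the swapped-in degree is $\pdof{}{0}$, which by the clause $\chocof{}{}\neq 0$ in the definition of cardinal utility contributes $0$, not $n$. Concretely, take one project with degrees of cost $0$ and $5$, budget $5$, voter $1$ reporting $[0,0]$ and voter $2$ reporting $[0,5]$: then $\conrange=\curly{0}$ and $\maxconrange=0$, yet \nrule selects the degree of cost $5$ (utility $1$ versus $0$). So under the literal definitions the claim for \nrule fails in this degenerate case, and the proposition implicitly assumes $\conrange$ contains a positive cost (equivalently, that $\conrange$ ranges over funded degrees only). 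Your observation that \drule is unaffected is correct: $0\in\conrange$ forces every $\lowerb{}{}=0$, so the null degree has disutility contribution $0$ while any degree of positive cost outside \conrange is rejected from above by some voter and contributes positive disutility, so the exchange still strictly improves. You should either state the positivity assumption explicitly or note the counterexample; otherwise your write-up is complete and, on this point, more careful than the paper's.
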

\begin{proof}
	First, we prove that \crule and \ccaprule are not range-abiding. Recall that \bud denotes the budget. Say we have a single project \pa with exactly two permissible costs, respectively equal to $\floor*{\frac{\bud-1}{n}}$ and \bud. Say all the voters set $\lowerb{}{} = \cof{}{1}$. One voter \ii sets $\upperb{}{} = \cof{}{2}$, whereas all the remaining voters report $\cof{}{1}$ as the upper bound. Clearly, in a set $S$ containing \pdof{}{1}, the utility from project \pa is at most $\bud-1$. Whereas, if $S$ selects \pdof{}{2}, the utility from \pa is equal to $\bud$ (but all due to the utility of a single voter). Thus $\pdof{}{2}$ is selected. Note that, in this example, $\conrange = \floor*{\frac{\bud-1}{n}}$ and clearly $\cof{}{2} > \conrange$.

    Now, assume that \nrule and \drule are not range-abiding. That is, in a set $S$ selected under the rule, $\chocof{}{} > \maxconrange$. Consider the set $S' = S\setminus\curly{\chodof{}{}} \cup \curly{\pdof{}{}:\maxconrange = \cof{}{}}$. Clearly, $S'$ is valid since $\chocof{}{} > \maxconrange$ and $S$ is valid. Also, $\utofd{}{\pa} > \utof{}{\pa}$ since the degree chosen in $S'$ is within the bounds reported by every voter (it gives the optimal utility of $n$ for \nrule and the optimal disutility of $0$ for \drule). Thus, $S'$ is a strictly better set and $S$ must not be selected. This forms a contradiction.
\end{proof}
The next axiom requires that increasing the budget should result in the winning degree of some project moving closer to the range of costs for it found to be acceptable unanimously by all the voters.
\begin{definition}[Range-converging]\label{def: rangeconverging}
    A PB rule \pbrule is said to be \emph{range-converging} if for any instance \instance, a set $S$ selected for \instance under \pbrule, and a set $S' \neq S$ selected under \pbrule on increasing the budget, it holds that whenever there is at least one project $k$ with $\tau_k \neq \emptyset$, there also exists some project $\pa \in \proj$ such that: $$\chocof{}{} \notin \conrange \implies |\chocof{}{} - \minconrange| > |\chocof{}{S'} - \minconrange|.$$
\end{definition}
\begin{proposition}\label{the: range-converging}
    All the four rules \nrule, \crule, \ccaprule, and \drule are range-converging.
\end{proposition}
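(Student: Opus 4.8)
The plan is to exploit the purely existential, implication-based shape of the conclusion. For a fixed $S$ selected under \bud and a different $S'$ selected after raising the budget, it suffices to exhibit one project $\pa$ for which \emph{either} the antecedent $\chocof{}{} \notin \conrange$ fails outright, \emph{or} the antecedent holds together with the strict inequality $|\chocof{}{} - \minconrange| > |\chocof{}{S'} - \minconrange|$. This dichotomy immediately organizes the proof around whether $S$ already seats some unanimously-acceptable project inside its range, and lets me treat all four rules uniformly in the easy case.

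First I would dispose of the trivial situation. Assume some project $k$ has $\conrange \neq \emptyset$ (otherwise there is nothing to prove). If there exists any project $\pa$ with $\conrange \neq \emptyset$ and $\chocof{}{} \in \conrange$, then the antecedent is false for that $\pa$, so it is a vacuous witness and we are finished for \nrule, \crule, \ccaprule, and \drule at once. Hence the only substantive case is the one in which every project with a nonempty unanimous range is chosen strictly outside that range in $S$.

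Second, in the substantive case I would pin down on which side of the range the chosen degree sits. For \nrule and \drule this is immediate from \Cref{the: range-abiding}: range-abidingness gives $\chocof{}{} \le \maxconrange$, and since $\conrange$ is exactly the set of degree-costs lying in $[\minconrange,\maxconrange]$, the hypothesis $\chocof{}{} \notin \conrange$ forces the underfunded situation $\chocof{}{} < \minconrange$. For \crule and \ccaprule, which are \emph{not} range-abiding, I must additionally allow overfunding $\chocof{}{} > \maxconrange$ and bound it separately, so this step is unavoidably rule-specific.

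Third, and this is the crux, I would invoke optimality of $S'$ under the larger budget together with $S' \neq S$ to force some project with $\conrange \neq \emptyset$ strictly toward \minconrange. Since $S$ stays feasible after the increase, $S'$ is at least as good and the extra budget is consumed only to strictly improve the objective; for \drule the point is cleanest, because an out-of-range unanimous project carries a strictly positive disutility contribution, so narrowing its gap is the most effective use of new budget, and for \nrule, \crule, \ccaprule a degree inside $\conrange$ realizes the maximal per-project contribution ($n$, respectively \maxconrange on that project), which an optimal larger-budget allocation should prefer to capture. The hard part will be exactly here: I must rule out the scenario in which the budget increase is absorbed entirely by projects with empty unanimous range (or merely reshuffled among them) while every unanimous-range project stays put, since that would leave no converging witness. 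Closing this gap requires an exchange argument comparing $S$ and $S'$ degree-by-degree and showing that any strict objective gain over $S$ can, without loss, be re-routed into shrinking the gap of some project with $\conrange \neq \emptyset$ — and I expect verifying that re-routing is possible for the utility-maximizing rules (where per-project benefit is bounded) to be the most delicate point.
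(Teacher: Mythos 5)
There is a genuine gap, and you name it yourself: the entire argument hinges on the ``crux'' step where you must rule out the scenario in which the extra budget is absorbed by projects with $\conrange = \emptyset$ while every unanimous-range project stays put, and you only gesture at an exchange/re-routing argument without carrying it out. As it stands, steps one and two (the vacuous-witness observation and the side-pinning via range-abidingness) are fine but do essentially no work toward the conclusion, so the proposal does not yet constitute a proof. Moreover, the exchange argument you envision is likely harder than necessary and may not even be the right tool: for \crule the per-voter contribution of a project equals $\chocof{}{}$ itself when it lies in the voter's range, so ``re-routing budget into shrinking the gap of a unanimous-range project'' is not obviously utility-improving degree by degree, and the case analysis over underfunded versus overfunded projects that you flag for \crule and \ccaprule would multiply quickly.

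The paper closes this gap by a much shorter contradiction argument that treats all four rules uniformly and never needs range-abidingness. Suppose no project witnesses the axiom; then every $\pa$ has $\chocof{}{} \notin \conrange$ and $\chocof{}{}$ is no farther from \conrange than $\chocof{}{S'}$ is. The key observation is a per-project monotonicity: for \nrule, \crule, and \ccaprule the quantity $\sum_{i \in \voters}{\utof{}{\pa}}$ only decreases as the chosen cost moves away from the costs in \conrange (fewer voters find it acceptable), and for \drule the aggregate disutility only increases. Summing over all projects gives $\sum_{i}{\utofd{}{}} < \sum_{i}{\utof{}{}}$, so $S'$ cannot be selected under the larger budget while $S$ remains feasible --- contradiction. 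If you want to salvage your outline, replace your third step with exactly this monotonicity-plus-contradiction argument; the exchange argument you defer to is not needed.
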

The next axiom requires that if some range of costs is found to be acceptable unanimously by all the voters, the project must be allocated the maximum amount in this range.
\begin{definition}[Range-unanimous]\label{def: rangeunanimous}
    A PB rule \pbrule is said to be \emph{range-unanimous} if for any instance \instance, whenever $\sum_{\pa \in \proj}{\maxconrange} \leq \bud$, the set $\curly{\pdof{}{}: \pa \in \proj,\;\cof{}{} = \maxconrange}$ is selected under \pbrule.
\end{definition}
Note that the above holds by default if \maxconrange is not defined for some \pa. Range-unanimity and range-abidingness do not imply each other, though they seem closely related. For example, take the rule that picks for each project \pa, degree with minimum cost in \conrange (whenever this set costs lesser than \bud). This is range-abiding but not range-unanimous. In an instance where the maximum costs in \conrange together cost more than the budget, range-unanimity is satisfied by default by any rule. However, a rule that selects a degree for project \pa whose cost is greater than maximum cost in \conrange, does not satisfy range-abidingness. Thus, range-unanimity does not imply range-abidingness.
\begin{proposition}\label{the: range-unanimous}
    The rules \nrule and \drule are range-unanimous, whereas \crule and \ccaprule are not.
\end{proposition}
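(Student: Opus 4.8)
The plan is to split the claim into the two positive rules (\nrule, \drule) and the two negative rules (\crule, \ccaprule), and to drive the positive direction by a single structural observation: for a project $\pa$ with $\conrange \neq \emptyset$, assigning it the cost \maxconrange{} places that cost inside the approved interval $[\lowerb{}{},\upperb{}{}]$ of \emph{every} voter, since by definition \maxconrange{} is the largest element of \conrange{} and every element of \conrange{} is unanimously acceptable.

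For the positive direction I would proceed in three steps. First, I record that under this assignment each such project contributes its per-project extremal value: for \nrule every voter contributes utility $1$, so the project contributes the maximum possible $n$; for \drule every voter contributes disutility $0$, the minimum possible. Second, I invoke the hypothesis $\sum_{\pa \in \proj} \maxconrange \leq \bud$ to conclude that the set $W = \curly{\pdof{}{} : \pa \in \proj,\; \cof{}{} = \maxconrange}$ is valid: it selects exactly one degree per project and its total cost is within \bud. Third, I argue global optimality by composing the per-project bounds. Since no single project can contribute more than $n$ under \nrule and $W$ attains this ceiling on every unanimously-approved project, $W$ attains the maximum total utility; and since total disutility is nonnegative while $W$ achieves total disutility $0$, $W$ attains the global minimum under \drule. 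In both cases $W$ is among the selected sets, which is exactly range-unanimity. Projects with $\conrange = \emptyset$ (where \maxconrange{} is undefined) are covered by the default clause noted after \Cref{def: rangeunanimous}.

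For the negative direction I would reuse the single-project instance from the proof of \Cref{the: range-abiding}: two permissible costs $\cof{}{1} = \floor*{\frac{\bud - 1}{n}}$ and $\cof{}{2} = \bud$, with every voter setting $\lowerb{}{} = \cof{}{1}$, exactly one voter setting $\upperb{}{} = \cof{}{2}$ and all others setting $\upperb{}{} = \cof{}{1}$. Here $\conrange = \curly{\cof{}{1}}$, so $\maxconrange = \cof{}{1} \leq \bud$, and range-unanimity would force the selection of \pdof{}{1}. I then compute the competing contribution of \pdof{}{2}: under \crule it equals $\cof{}{2} = \bud$, which strictly exceeds the $n\cof{}{1} \leq \bud - 1$ delivered by \pdof{}{1}; under \ccaprule it equals $\bud + (n-1)\cof{}{1}$, again strictly larger than the $n\cof{}{1}$ delivered by \pdof{}{1}. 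Hence both rules select \pdof{}{2}, violating range-unanimity.

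The genuinely delicate point is the composition step for \nrule together with the treatment of empty-\conrange{} projects. For \drule global optimality is immediate because a total disutility of $0$ cannot be beaten, but for \nrule one must be certain that forcing \maxconrange{} on every unanimously-approved project cannot be improved upon; this is exactly where the \emph{absolute} per-project ceiling of $n$ (rather than a cost-scaled quantity) is essential. It is precisely this ceiling that breaks for the cost-based rules, where a cost above \maxconrange{} accepted by fewer voters can still yield higher total utility, which both explains the dichotomy and furnishes the counterexample. I would therefore state the empty-\conrange{} case explicitly and lean on the default convention to close the validity and optimality argument without a gap.
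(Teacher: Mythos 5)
Your proposal is correct and follows essentially the same route as the paper: the positive cases rest on the observation that the set $\curly{\pdof{}{}: \pa \in \proj,\;\cof{}{} = \maxconrange}$ is valid under the hypothesis and attains the absolute optimum ($mn$ total utility for \nrule, zero total disutility for \drule), and the negative cases reuse the single-project counterexample from the proof of \Cref{the: range-abiding}. Your explicit per-project contribution computations for \crule and \ccaprule and your handling of projects with $\conrange = \emptyset$ are just more detailed renderings of the paper's terse argument, not a different approach.
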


The following axiom essentially implies that if two valid sets differ only on the degree of one project, then the set with higher degree needs to be preferred.

\begin{definition}[Degree-efficient]\label{def: degreeefficient}
    A PB rule \pbrule is said to be \emph{degree-efficient} if for any instance \instance, any project $\pa \in \proj$, any set $S$ selected under \pbrule, and the degree $x \in \chodof{}{}$, it holds that $$k>x\implies\csetof{}-\chocof{}{}+\cof{}{k} > \bud.$$
\end{definition}
\begin{proposition}\label{the: degreeefficient}
    The \ccaprule is degree-efficient, whereas \nrule, \crule, and \drule are not.
\end{proposition}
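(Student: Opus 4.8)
The plan is to treat the positive claim for \ccaprule and the three negative claims separately: the former rests on a monotonicity property of the capped utility, while the latter all follow from a single small counterexample.

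For \ccaprule, I would first record that its per-project utility is monotone in the chosen cost. Reading off the three cases defining \utof{}{j}, the map sending \chocof{}{} to \utof{}{j} is non-decreasing for every voter \ii (it equals $0$ below \lowerb{}{}, equals the cost on $[\lowerb{}{},\upperb{}{}]$, and then stays frozen at \upperb{}{}); summing over voters, $\sum_{i \in \voters}\utof{}{}$ is non-decreasing in each \chocof{}{}. I would then argue by contradiction. Suppose a selected set $S$ violates degree-efficiency at some project \pa, so there is a higher degree $k$ with $\csetof{}-\chocof{}{}+\cof{}{k}\leq \bud$. Replacing \pa's chosen degree by $k$ gives a valid set $S'$ with $\csetof{S'}>\csetof{}$ (higher degrees cost strictly more) and, by monotonicity, $\sum_{i \in \voters}\utof{}{}\leq \sum_{i \in \voters}\utofd{}{}$. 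Choosing $S$ to be a utility-maximizer of maximum cost, the strict cost increase of $S'$ contradicts cost-maximality, so no affordable upgrade can exist and the defining inequality holds.

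For \nrule, \crule, and \drule I would exhibit one instance whose unique optimum leaves an affordable upgrade unused. Take a single project \pa with degrees of cost $\cof{}{0}=0$, $\cof{}{1}=5$, $\cof{}{2}=10$, budget $\bud=10$, and one voter with $\lowerb{}{}=\upperb{}{}=5$. The degree of cost $5$ strictly beats the degree of cost $10$ under each rule: \nrule scores them $1$ against $0$, \crule scores them $5$ against $0$ (the cost $10$ lies outside the interval $[5,5]$ and contributes nothing), and \drule assigns disutilities $0$ against $5$. Hence each rule selects the degree of cost $5$ uniquely, and yet $\csetof{}-\chocof{}{}+\cof{}{2}=5-5+10=10\leq\bud$, so the required strict inequality fails.

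The delicate point is the tie in the positive direction: monotonicity only yields $\sum_{i \in \voters}\utof{}{}\leq\sum_{i \in \voters}\utofd{}{}$, with equality exactly when \pa's chosen cost already meets or exceeds every voter's upper bound. In that regime the upgraded set $S'$ is itself a utility-maximizer, which is precisely why \ccaprule stops differing from the other rules there; the argument therefore needs the convention that the rule returns a maximizer of maximum cost (equivalently, that degree-efficiency is evaluated on the rule's resolute output), and I would state this convention explicitly. I would also verify that the counterexample instance has a genuinely unique optimum for \nrule, \crule, and \drule, so that no selected set is degree-efficient, noting the contrast that on the same instance \ccaprule merely ties between the two degrees.
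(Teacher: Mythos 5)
Your proposal matches the paper's proof in both directions: the positive claim for \ccaprule is established by the same swap argument (replace the chosen degree of one project by an affordable higher degree and compare the capped utilities case by case), and the negative claims use essentially the same single-project counterexample in which the top degree is affordable but exceeds every voter's upper bound while a cheaper degree is unanimously acceptable. Your explicit handling of the tie is in fact more careful than the paper's appendix, which asserts a strict utility increase that fails when the chosen cost already meets or exceeds every voter's upper bound; the caveat you raise about needing a resolute (maximum-cost) tie-breaking convention is a genuine gap in the paper's own argument rather than an artifact of your approach.
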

\begin{proof}
    Let \pbrule be \nrule, \crule, or \drule. Consider the example with single project $\pa$ such that for every $i$: (i) $\cof{}{\mdof{}} \!\!<\!\! \bud$ and $\upperb{}{} \!\!<\!\! \cof{}{\mdof{}}$ (ii) there exists $t \!\!<\!\! \mdof{}$ such that $\lowerb{}{} \!\!\leq\!\! \cof{}{} \!\!\leq\!\! \upperb{}{}$. The set $\curly{\pdof{}{}: \cof{}{} = \maxconrange}$ is selected under \pbrule since we know that $\conrange \neq \emptyset$ and $\maxconrange < \bud$. Assume that the axiom is satisfied. Then, by the definiton of \mdof{}, $\cof{}{\mdof{}}>\bud$ contradicting (i). The proof for \ccaprule is deferred to appendix.
\end{proof}

The next two axioms insist that the valid set closer to the bounds reported by all the voters must be preferred over a valid set farther from them.
\begin{definition}[Lower bound-sensitive]\label{def: lboundsensitive}
    A PB rule \pbrule is said to be \emph{lower bound-sensitive} if for any instance \instance, any project $\pa \in \proj$, and any two valid set $S,S'$ such that for every voter $i$ we have $\chocof{}{S}\!\!<\!\!\chocof{}{S'}\!\!<\!\!\lowerb{}{}$, $S$ is not selected under \pbrule.
\end{definition}
\begin{proposition}\label{the: lboundsensitive}
    The \drule is lower bound-sensitive, whereas \nrule, \crule, and \ccaprule are not.
\end{proposition}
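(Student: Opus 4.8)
The plan is to prove the two halves separately: a direct argument that \drule is lower bound-sensitive, and a single shared counterexample ruling out \nrule, \crule, and \ccaprule. Throughout I read \Cref{def: lboundsensitive} as comparing two valid sets $S$ and $S'$ that coincide on every project except \pa, with $S'$ raising the chosen degree of \pa; this is the reading under which the informal ``closer to the reported bounds'' phrasing makes sense, and it is what makes the statement hold (for arbitrary $S,S'$ even \drule would fail, since $S'$ could be far worse on the other projects).

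For \drule, the point is that when both chosen costs of \pa sit strictly below every voter's lower bound, each voter's disutility from \pa equals her distance to that lower bound. Since $\chocof{}{} < \chocof{}{S'} < \lowerb{}{}$ for every $i \in \voters$, the per-voter disutility contributions of \pa are $\dutof{}{j} = \lowerb{}{} - \chocof{}{}$ and $\dutofd{}{j} = \lowerb{}{} - \chocof{}{S'}$, whence $\dutofd{}{j} < \dutof{}{j}$ strictly. Because $S$ and $S'$ agree on every other project their remaining contributions cancel, so summing over voters yields $\sum_{i \in \voters} \dutofd{}{} < \sum_{i \in \voters} \dutof{}{}$. As $S'$ is valid and has strictly smaller total disutility, $S$ does not minimize the total disutility and hence is not selected, which is exactly lower bound-sensitivity.

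For the other three rules I will build one instance that defeats all of them simultaneously, exploiting that each of \nrule, \crule, and \ccaprule assigns zero marginal utility to a project whose chosen cost lies below a voter's lower bound. Take a single project \pa with permissible costs $0,1,2,3$, budget $\bud = 2$, and every voter reporting $\lowerb{}{} = \upperb{}{} = 3$. The valid sets are then precisely those choosing a degree of cost $0$, $1$, or $2$. Letting $S$ pick the degree of cost $1$ and $S'$ the degree of cost $2$, the hypothesis $\chocof{}{} < \chocof{}{S'} < \lowerb{}{}$ holds for all voters. Since each of these costs lies below every lower bound, \pa contributes $0$ to every voter's utility under all three rules in both $S$ and $S'$; consequently every valid set attains total utility $0$, so $S$ is a maximizer and is therefore selected, violating the axiom.

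I do not anticipate a genuine technical difficulty: the substance is the structural asymmetry that \drule records how far below the lower bound a cost falls, whereas \nrule, \crule, and \ccaprule flatten every sub-lower-bound cost to zero utility. The only two points needing care are fixing the interpretation that $S$ and $S'$ differ only on \pa, and verifying that the counterexample respects the model, i.e., that the reported bounds are permissible costs and that $S$ and $S'$ are valid (one degree per project and cost at most \bud).
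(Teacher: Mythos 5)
Your proposal is correct and follows essentially the same route as the paper: a direct domination argument showing $S'$ strictly decreases total disutility under \drule, and a counterexample in which all chosen costs fall below every lower bound so that \nrule, \crule, and \ccaprule assign every valid set the same (zero) utility from project \pa. Your single-project counterexample is a simpler variant of the paper's two-project one, and your explicit stipulation that $S$ and $S'$ coincide outside \pa is a reading the paper's own \drule argument implicitly assumes but does not state.
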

\begin{proof}
    The proof for \drule is deferred to appendix. Let \pbrule be \nrule, \crule, or \ccaprule. For lower-bound sensitivity, consider a counter example as follows: (i) there are two projects with $\dsetof{1} = \curly{\pdof{1}{0},\pdof{1}{1},\pdof{1}{2},\pdof{1}{3}}$ and $\dsetof{2} = \curly{\pdof{2}{0},\pdof{2}{1}}$ (ii) $\cof{1}{1} = 1$, $\cof{1}{2} = 2$, $\cof{1}{3} = \bud-3$, and $\cof{2}{1} = \bud-2$ (iii) for every voter $i\in\voters$, $\lowerb{}{1} = \upperb{}{1} = \cof{1}{3}$ and $\lowerb{}{2} = \upperb{}{2} = \cof{2}{1}$. Clearly, $S= \curly{\pdof{1}{1},\pdof{2}{1}}$ is a set that is selected under \pbrule. Set $S' = \curly{\pdof{1}{2},\pdof{2}{1}}$ and $j = 1$.
\end{proof}
\begin{definition}[Upper bound-sensitive]\label{def: uboundsensitive}
    A PB rule \pbrule is said to be \emph{upper bound-sensitive} if for any instance \instance, any project $\pa \in \proj$, and any two valid set $S,S'$ such that for every voter $i$ we have $\chocof{}{S}\!\!>\!\!\chocof{}{S'}\!\!>\!\!\upperb{}{}$, $S$ is not selected under \pbrule.
\end{definition}
\begin{proposition}\label{the: uboundsensitive}
    The rules \nrule,\crule, and \drule are upper bound-sensitive, whereas \ccaprule is not.
\end{proposition}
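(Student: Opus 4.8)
The plan is to establish the three positive cases (\nrule, \crule, \drule) by a single dominance argument and to refute \ccaprule with one explicit instance. The hypothesis hands me, for the fixed project \pa, two valid sets with $\chocof{}{S} > \chocof{}{S'} > \upperb{}{}$ for every voter $i$; in particular the allocation $S$ makes to \pa lies strictly above \emph{every} voter's upper bound, i.e. $\chocof{}{S} > \upperb{}{}$ for all $i$, and moreover \pa owns a strictly cheaper degree (the one of cost $\chocof{}{S'}$) that still sits above all upper bounds. In each positive case I will exhibit a valid set obtained from $S$ by altering only the degree of \pa whose objective value strictly beats that of $S$; this shows $S$ is not an optimizer and hence not selected. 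Working with a self-constructed modification rather than the given $S'$ keeps the argument robust, since I never need to assume $S$ and $S'$ agree outside \pa.

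For \drule the witness is immediate. I replace \pa's degree in $S$ by the degree of cost $\chocof{}{S'}$ and leave every other project untouched; the result is valid because its cost only drops. As both $\chocof{}{S}$ and $\chocof{}{S'}$ exceed $\upperb{}{}$ for each $i$, voter $i$'s disutility from \pa falls from $\chocof{}{S}-\upperb{}{}$ to $\chocof{}{S'}-\upperb{}{}$, a strict decrease; summing over the $n\ge 1$ voters while nothing else changes yields a valid set with total disutility strictly below $\sum_{i \in \voters}{\dutof{}{}}$. Hence $S$ does not minimize disutility and is not selected.

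For \nrule and \crule the decisive feature is that a cost above every voter's upper bound lies outside every approved range and therefore contributes exactly $0$ to the utilitarian objective, whereas in \ccaprule it still contributes the cap $\upperb{}{}$. The plan is to lower \pa from its above-bound degree in $S$ to a within-range degree of positive cost: such a degree costs at most $\min_i \upperb{}{} < \chocof{}{S}$, so the modified set stays valid, while the utility drawn from \pa jumps from $0$ to a strictly positive value (for instance $n$ for \nrule, or $n\maxconrange$ for \crule when the swap targets the degree of cost $\maxconrange$), contradicting the optimality of $S$. The hard part will be guaranteeing that such a within-range degree exists: if every upper bound for \pa equals $0$ then every degree of \pa yields $0$ and the argument stalls, so I will either restrict to the intended non-degenerate regime in which $\conrange \neq \emptyset$ (legitimizing the swap to the degree of cost $\maxconrange$) or dispatch the all-zero-bound case separately. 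This degenerate regime is exactly the subtlety separating \nrule and \crule from \ccaprule.

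Finally, for \ccaprule I will build an instance in which the wasteful set is genuinely optimal. Take a single project with permissible costs $0 < \upperb{}{} < \chocof{}{S'} < \chocof{}{S} = \bud$ and let every voter report lower bound $0$ and upper bound equal to the positive degree of cost $\upperb{}{}$. Then every degree of cost at least $\upperb{}{}$ contributes precisely the capped value $\upperb{}{}$ per voter, so the allocations of cost $\upperb{}{}$, $\chocof{}{S'}$ and $\chocof{}{S}$ all attain the same total utility $\sum_{i \in \voters}{\upperb{}{}}$. Consequently $S$, which sets \pa to cost $\chocof{}{S}=\bud$, ties for the maximum and is therefore selected even though $\chocof{}{S} > \chocof{}{S'} > \upperb{}{}$, certifying that \ccaprule is not upper bound-sensitive and completing the case analysis.
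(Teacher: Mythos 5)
Your handling of \drule and your \ccaprule counterexample are both sound. For \drule you are in fact more careful than the paper: the paper compares $S$ against the given $S'$ directly, which is fragile because the two sets may differ on projects other than \pa, whereas your local swap (replace \pa's degree in $S$ by the degree of cost \chocof{}{S'} and keep everything else) yields a valid set with strictly smaller total disutility and cleanly rules out $S$. Your single-project instance for \ccaprule is the same tie-at-the-cap idea as the paper's two-project instance and works.

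The gap is in the positive claim for \nrule and \crule, and you half-spotted it yourself but did not close it. Your plan is to swap \pa down to a within-range degree of positive cost so that \pa's contribution jumps from $0$ to something positive; such a degree need not exist, and neither of your escape routes handles the failure case. Restricting to $\conrange \neq \emptyset$ does not help: if every voter reports $\upperb{}{} = 0$ (hence $\lowerb{}{} = 0$), then $\conrange = \curly{0}$ is nonempty, yet the degree of cost $\maxconrange = 0$ contributes nothing (for \nrule the definition explicitly requires $\chocof{}{} \neq 0$, and for \crule the contribution is the cost itself). In that regime every degree of \pa contributes $0$ to the total utility, so $S$ ties with your modified set, and under the paper's tie-inclusive notion of ``selected'' (a set is selected iff it maximizes total utility) $S$ is still selected; a one-project instance with permissible costs $0,1,2$, budget $2$, and one voter with $\lowerb{}{}=\upperb{}{}=0$ realizes exactly the hypothesis $\chocof{}{S} > \chocof{}{S'} > \upperb{}{}$. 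So the case you defer cannot be ``dispatched separately''---it is precisely where the argument breaks, and where the paper's own unjustified assertion that ``none of them chooses $S$ or $S'$'' also fails. To complete the proof you would need to assume some voter has $\upperb{}{} > 0$ (which does guarantee a nonzero degree approved by at least one voter, namely the one of cost $\upperb{}{}$) or to adopt a strict-improvement reading of selection. Separately, your claimed post-swap contributions of $n$ and $n\maxconrange$ are correct only when the target cost is a nonzero element of \conrange; in general all you can (and need to) claim is that the contribution becomes strictly positive.
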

Next, we extend discount montonicity \cite{talmon2019framework} to ensure that a winning degree of project shouldn't be omitted if it becomes less expensive.
\begin{definition}[Discount-proof]\label{def: discount}
    A PB rule \pbrule is said to be \emph{discount-proof} if for any instance \instance, any project $\pa \in \proj$, and a set $S$ that is selected under \pbrule, $S$ continues to be selected if \chocof{}{} is decreased by $1$.
\end{definition}
\begin{proposition}\label{the: discount}
    The \nrule is discount-proof, whereas \crule, \ccaprule, and \drule are not.
\end{proposition}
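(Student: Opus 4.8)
The plan is to handle all four rules through a single structural observation and then separate the positive case \nrule from the three negative cases. The observation is a \emph{locality} principle: decreasing \chocof{}{} by one unit changes only the cost of the single degree \pdof{}{} that $S$ assigns to \pa, so it shifts the objective of \emph{every} valid set that also assigns \pdof{}{} to \pa by exactly the same amount, while leaving the objective of every valid set avoiding \pdof{}{} unchanged; and since the discount only lowers total cost, all previously valid sets stay valid. Writing $\delta$ for this common change in the contribution of \pa, the rule is discount-proof as soon as $\delta$ has the right sign, so the whole argument reduces to controlling $\delta$ for each rule.

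For \nrule I would show $\delta \geq 0$. Here a voter's contribution \utof{}{j} is the $0/1$ indicator of the event $\lowerb{}{} \leq \chocof{}{} \leq \upperb{}{}$, which ignores the precise cost. Lowering \chocof{}{} by one unit either keeps an approving voter approving, or converts a voter for whom the degree sat just above \upperb{}{} into an approving one, and it can never push a cost above an upper bound; hence the count weakly increases and $\delta \geq 0$. By locality, the competitors that contain \pdof{}{} move by the same $\delta$, the competitors avoiding \pdof{}{} are frozen, and $S$ was already optimal, so $S$ stays optimal and therefore remains selected.

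For the three negative rules I would give small two-project instances in which $S$ ties for the optimum before the discount but is strictly beaten afterwards, exploiting that these objectives read the exact cost. For \crule and \ccaprule, take $\proj = \{1,2\}$ with $\cof{1}{1} = \cof{2}{1} = \bud$ and a decoy degree $\cof{1}{2} = \bud+1$ (so the upper bound used below survives the discount), one voter approving both projects on wide ranges, and $S = \{\pdof{1}{1}, \pdof{2}{0}\}$; then $S$ and $\{\pdof{1}{0}, \pdof{2}{1}\}$ both give utility \bud, but once \cof{1}{1} drops to $\bud-1$ the set $S$ gives only $\bud-1$ whereas its competitor still gives \bud. For \drule I would instead place the discounted degree at a lower bound: with $\cof{1}{1} = 2$, $\cof{1}{2} = 5$, $\cof{2}{1} = 3$, budget $5$, and a voter whose bounds make the disutility vanish only at \cof{1}{2} for project $1$ and at \cof{2}{1} for project $2$, the valid sets $\{\pdof{1}{2}, \pdof{2}{0}\}$ and $S = \{\pdof{1}{1}, \pdof{2}{1}\}$ tie at total disutility $3$, but lowering \cof{1}{1} to $1$ raises the disutility of $S$ to $4$ while its competitor is untouched, so $S$ is no longer selected.

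The step needing the most care is the boundary case hidden in the \nrule argument: a voter whose \emph{lower} bound equals \chocof{}{} exactly would drop out when the cost is nudged down, making $\delta$ negative and threatening the positive claim. I would resolve this by observing that such a lower bound coincides with the very cost being decreased, so after the discount it is no longer a permissible cost of \pa; the discount operation is therefore only well-posed on instances where the decreased cost lies strictly inside every approving voter's range (in particular whenever lower bounds sit at their default value $0$), and on exactly those instances $\delta \geq 0$ and the positive argument goes through. Confirming that this is the intended reading of the model, and that each negative instance remains well-formed after its discount, is the delicate part of the write-up.
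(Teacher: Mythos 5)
Your overall strategy---a locality observation reducing discount-proofness to the sign of the change $\delta$ in the contribution of the discounted degree, a sign argument for \nrule, and tie-breaking counterexamples for the other three rules---is essentially the route the paper takes. Your counterexamples differ from the paper's (which uses one shared instance for all three negative cases: two projects each with a single non-zero degree of cost $2$, budget $\bud=2$, and every voter's bounds pinned at those costs, so that $\curly{\pdof{1}{1}}$ ties with $\curly{\pdof{2}{1}}$ before the discount and loses after), but yours are valid; the decoy degree $\cof{1}{2}=\bud+1$ for \crule and \ccaprule, and placing the discounted degree strictly below the lower bound for \drule, correctly sidestep the boundary issue you flag.

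That boundary case is the one substantive divergence, and as written your resolution leaves a gap. Restricting the discount operation to instances where the decreased cost lies strictly inside every approving voter's range proves only a weakened version of the proposition, since the definition quantifies over any instance and any selected $S$. The paper resolves it the other way: because \lowerb{}{} and \upperb{}{} must themselves be permissible costs of \pa, a bound that coincides with the discounted cost is deemed to decrease by $1$ along with it, so a voter with $\lowerb{}{}=\chocof{}{}$ keeps approving and $\delta\geq 0$ holds unconditionally. You should adopt that convention rather than shrink the axiom's domain. Separately, note that both your locality argument and the paper's ignore sets containing the discounted degree whose cost was over budget by exactly $1$ and which become valid after the discount: such a set shifts by the same $\delta$ as $S$ but was never compared against $S$ before the discount, so the appeal to "$S$ was already optimal" does not cover it. That blind spot is shared with the paper rather than introduced by you, but it is worth acknowledging if you tighten the write-up.
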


\section{Summary}\label{sec: conclusion}
\vspace*{-0.3\baselineskip}
Many times, there are multiple ways of executing a public project and hence several, but limited number of, choices for the amount to be allocated to this project. Unfortunately, the existing preference elicitation methods and aggregation rules for participatory budgeting do not take this factor into account. Our work is an attempt to bridge this gap. We generalized two utility notions defined for PB with approval votes to our model. We also proposed two other utility notions unique to our model. We analyzed all the corresponding utilitarian PB rules computationally and axiomatically.

Our computational part strengthens all the existing positive results, and also introduces several new parameterized tractability results (FPT, parameterized FPTAS) taking into account the parameters recently introduced in the PB literature. It is worth highlighting that all our computational results in \Cref{sec: computational} can be generalized by replacing the utilities with cardinal utility for every degree of each project. However, we present all the results for ranged approval votes due to their practical relevance, simplicity, and deep association with axiomatic analysis.

Followed by this, we introduce several axioms for our model with ranged approval votes and investigate which of these are satisfied by our PB rules. Note that, though none of the proposed PB rules satisfies all the axioms, every rule satisfies some axioms. Axiomatic analysis reflects the properties of each rule, using which the PB organizer can pick a rule based on the context. Also, it is worth bearing in mind that the novel disutility notion and PB rule \drule we proposed for our model satisfies as many axioms as any simple approval-based PB rule satisfies. One of the key takeaways of this paper is hence a conclusion that \drule is a very good choice when each voter approves a range of costs.
\section*{Acknowledgements}
Sreedurga gratefully acknowledges the support of prime minister research fellowship (PMRF) by Government of India, and also thanks Prof. Ulle Endriss, Simon Rey, and Dr. Jan Maly for their helpful feedback.



\bibliographystyle{plain}
\small{\bibliography{paper.bib}}
\pagebreak
\appendix
\section{Appendix}
\newtheorem*{ccaprule_rep}{\Cref{the: ccaprule}}
\begin{ccaprule_rep}
Given an instance \instance, the following statements hold:
    \begin{enumerate}
        \item For any value $s$, it is \NPH to determine if \ccaprule outputs a set that has a total utility of at least $s$.
        \item A subset $S \in \valid$ that is selected under \ccaprule can be computed in pseudo-polynomial time.
        \item There is an FPTAS for \ccaprule.
        \item Computing a subset $S \in \valid$ that is selected under \ccaprule is in \FPT w.r.t. scalable 
 limit \slimit.
        \item Computing a subset $S \in \valid$ that is selected under \crule is in \FPT w.r.t. $m$.
    \end{enumerate}
\end{ccaprule_rep}
\begin{proof}
    All the above statements follow from the proofs in \Cref{sec: crule} with some minor changes. Statements $(1)$ and $(5)$ follow without any changes in the proof (\ccaprule is the same as \crule when every project has only permissible cost).

    For (2), we modify the definition of \scoreof{}{} as $\scoreof{}{} = \suml{i \in \voters}{\utof{}{j}}$, where $\utof{}{j}$ is as defined in the rule. The maximum possible total score of $S$ is again $n\csetof{S}$ (since $\utof{}{j} \leq \chocof{}{S}$ always) and the rest follows. For (3), we use the above $\scoreof{}{}$ in the proof of \Cref{the: crule-fptas} and the rest is same. The proof for (4) follows from \Cref{the: crule-sl} since $\utof{}{j}$ is upper-bounded by \cof{}{} for every $i \in \voters$ and $\pa \in \proj$.
\end{proof}

\newtheorem*{shrink_rep}{\Cref{the: shrink}}
\begin{shrink_rep}
All the four rules \nrule, \crule, \ccaprule, and \drule are shrink-resistant.
\end{shrink_rep}
\begin{proof} 
 	First, we look at \nrule. Consider a set $S$ selected under \nrule. If initially $\lowerb{}{} \leq \chocof{}{} \leq \upperb{}{}$, then the condition continues to satisfy even if the \lowerb{}{} and \upperb{}{} are shifted closer to \chocof{}{}. Thus, \utof{}{} does not change. If initially $\chocof{}{} < \lowerb{}{}$ or $\upperb{}{} < \chocof{}{}$, then shifting \lowerb{}{} and \upperb{}{} closer to \chocof{}{} may cause \utof{}{} remain the same or increase by $1$. Utilities of sets without \chodof{}{} remain unchanged thus proving the claim. This logic can also be extended to \crule and \ccaprule. For both these rules, shifting will make \utof{}{} remain the same or increased by \chocof{}{}. Finally, look at \drule. By the definition of \dutof{}{}, shifting \lowerb{}{} and \upperb{}{} closer to \chocof{}{} will cause a strict decrease in \dutof{}{} if it was a non-zero value. Else, \dutof{}{} remains the same.
\end{proof}

\newtheorem*{range-converging_rep}{\Cref{the: range-converging}}
\begin{range-converging_rep}
All the four rules \nrule, \crule, \ccaprule, and \drule are range-converging.
\end{range-converging_rep}
\begin{proof}
    First, consider the first three rules. Let \pbrule be \nrule, \crule, or \ccaprule. Note that for these rules, the value $\sum_{i \in \voters}{\utof{}{\pa}}$ decreases as $\chocof{}{}$ moves farther from all the costs in \conrange (the farther \chocof{}{} moves, more is the number of voters for whom \chocof{}{} falls out of their acceptable range). We prove the claim by contradiction. Assume that for any project $\pa \in \proj$ such that $\chocof{}{} \notin \conrange$ and $\chocof{}{} \neq \chocof{}{S'}$, $\chocof{}{}$ is closer to \conrange than \chocof{}{S'} is. This implies that $\sum_{i \in \voters}{\utofd{}{\pa}} < \sum_{i \in \voters}{\utof{}{\pa}}$. By adding these inequalities for all such $\pa$, we have $\sum_{i \in \voters}{\utofd{}{}} < \sum_{i \in \voters}{\utof{}{}}$. Thus, $S'$ cannot be selected by \pbrule since $S$ continues to be feasible under the increased budget.  
    The proof for rule \drule follows similar idea. This is because, $\sum_{i \in \voters}{\dutof{}{\pa}}$ increases as $\chocof{}{}$ moves farther from all the costs in \conrange (as \chocof{}{} falls out of the acceptable range for more voters).
\end{proof}

\newtheorem*{range-unanimous_rep}{\Cref{the: range-unanimous}}
\begin{range-unanimous_rep}
The rules \nrule and \drule are range-unanimous, whereas \crule and \ccaprule are not.
\end{range-unanimous_rep}
\begin{proof}
    The proof for \crule and \ccaprule follow from the example in the proof of \Cref{the: range-abiding}. Now, consider the rules \nrule and \drule. Note that the set $\curly{\pdof{}{}: \pa \in \proj,\;\cof{}{} = \maxconrange}$ achieves the optimal total utility $mn$ and optimal total disutility of $0$ respectively for both the rules. Hence, if $S'$ not being selected implies that $\csetof{S'} > \bud$. This contradicts $\suml{\pa \in \proj}{\maxconrange} \leq \bud$ and completes the argument. 
\end{proof}

\newtheorem*{degreeefficient_rep}{\Cref{the: degreeefficient}}
\begin{degreeefficient_rep}
The \ccaprule is degree-efficient, whereas \nrule, \crule, and \drule are not.
\end{degreeefficient_rep}
\begin{proof}
    We only prove for \ccaprule here since the rest is covered in the main paper. Let $S$ be a selected set and $k < \mdof{}$ be such that $k > \chodof{}{}$. Consider the set $S' = S\setminus\curly{\chodof{}{}}\cup \curly{\pdof{}{k}}$. Take any voter $i$. If $\chocof{}{} < \lowerb{}{}$, $\utofd{}{j}$ remains zero like $\utof{}{j}$ or increase by \cof{}{k}. If $\lowerb{}{}\leq\chocof{}{}\leq\upperb{}{}$, $\utofd{}{j}$ is exactly $\min{(\cof{}{k},\upperb{}{})}-\chocof{}{}$ more than $\utof{}{j}$. This value is positive since $k > \chodof{}{}$. Finally, if $\chocof{}{} > \upperb{}{}$, $\utofd{}{j}$ is exactly $\cof{}{k}$ more than $\utof{}{j}$. Thus, \utofd{}{} is clearly greater than \utof{}{}. Since $S$ is selected under \drule, $S'$ must be infeasible. Thus, the given condition holds.
\end{proof}

\newtheorem*{lboundsensitive_rep}{\Cref{the: lboundsensitive}}
\begin{lboundsensitive_rep}
The \drule is lower bound-sensitive, whereas \nrule, \crule, and \ccaprule are not.
\end{lboundsensitive_rep}
\begin{proof}
    We only prove for \drule here since the rest is covered in the main paper. The proof is straight-forward. Since the disutility of $i$ from $S$ depends on $\lowerb{}{}-\chocof{}{}$ and $\chocof{}{}-\upperb{}{}$, clearly, $\dutof{}{} < \dutofd{}{}$ for every voter $i$. And hence $S$ does not get selected by \drule.
\end{proof}

\newtheorem*{uboundsensitive_rep}{\Cref{the: uboundsensitive}}
\begin{uboundsensitive_rep}
The rules \nrule,\crule, and \drule are upper bound-sensitive, whereas \ccaprule is not.
\end{uboundsensitive_rep}
\begin{proof}
    First, let the rule \pbrule be \nrule, \crule, or \drule. Since $\chocof{}{} > \upperb{}{}$, $\utof{}{j} = 0$ and $\dutof{}{} = \chocof{}{}-\upperb{}{}$. Clearly,  utility of $S$ and $S'$ will be same for \nrule and \crule. However, none of them chooses $S$ or $S'$. For \drule, utility from $S$ will be lesser than that from $S'$. Thus, $S$ cannot be selected. Finally, consider \ccaprule. Consider an instance such that: (i) there are two projects with $\dsetof{1} = \curly{\pdof{1}{0},\pdof{1}{1},\pdof{1}{2},\pdof{1}{3}}$ and $\dsetof{2} = \curly{\pdof{2}{0},\pdof{2}{1}}$ (ii) $\cof{1}{1} = 1$, $\cof{1}{2} = 2$, $\cof{1}{3} = 3$, and $\cof{2}{1} = \bud-3$ (iii) for every voter $i\in\voters$, $\lowerb{}{1} = \upperb{}{1} = \cof{1}{1}$ and $\lowerb{}{2} = \upperb{}{2} = \cof{2}{1}$. Clearly, $S= \curly{\pdof{1}{3},\pdof{2}{1}}$ is a set that is selected under \ccaprule. Set $S' = \curly{\pdof{1}{2},\pdof{2}{1}}$ and $j = 1$. The condition is not met.
\end{proof}

\newtheorem*{discount_rep}{\Cref{the: discount}}
\begin{discount_rep}
The \nrule is discount-proof, whereas \crule, \ccaprule, and \drule are not.
\end{discount_rep}
\begin{proof}
    Let the rule \pbrule be \crule, \ccaprule, or \drule. Consider an instance such that: (i) there are two projects with $\dsetof{1} = \curly{\pdof{1}{0},\pdof{1}{1}}$ and $\dsetof{2} = \curly{\pdof{2}{0},\pdof{2}{1}}$ (ii) $\cof{1}{1} = 2$ and $\cof{2}{1} = 2$, $\bud=2$ (iii) for every voter $i\in\voters$, $\lowerb{}{1} = \upperb{}{1} = \cof{1}{1}$ and $\lowerb{}{2} = \upperb{}{2} = \cof{2}{1}$. Clearly, $S= \curly{\pdof{1}{1}}$ is a set that is selected under \pbrule. If the \cof{1}{1} is changed to $1$, $S$ is not selected since only $\curly{\pdof{2}{1}}$ is the unique set to be selected.

    Now, consider \nrule. This is straight-forward since the utilities are unaffected by the costs as long as they are in the acceptable range. Please note that by their definitions, both \lowerb{}{} and \upperb{}{} belong to the set of permissible costs and hence if the cost that is reduced is equal to the lower bound reported by some voter, the lower bound is also automatically considered to be reduced by $1$. Thus, the number of voters finding the chosen degree to be acceptable remains unaffected.
\end{proof}
\end{document}